\title{Smaller parameters for vertex cover kernelization}
\author{Eva-Maria C. Hols}
\author{Stefan Kratsch}
\affil{Department of Computer Science, University of Bonn, Germany \footnote{Email adresses: \{hols,kratsch\}@informatik.hu-berlin.de} }
\theoremstyle{plain}
\newtheorem{proposition}{Proposition}
\newtheorem{theorem}{Theorem}
\newtheorem{definition}{Definition}
\newtheorem{lemma}{Lemma}
\newtheorem{corollary}{Corollary}
\theoremstyle{remark}
\newcommand{\prob}[1]{\textsc{\lowercase{#1}}}
\newcommand{\C}{\mathcal{C}\xspace}
\newcommand{\Oh}{\mathcal{O}}
\newcommand{\IS}{\prob{Independent Set}\xspace}
\newcommand{\VC}{\prob{Vertex Cover}\xspace}
\newcommand{\FVS}{\prob{Feedback Vertex Set}\xspace}
\newcommand{\ISLP}{\mathrm{LP_{IS}}}
\newcommand{\VCLP}{\mathrm{LP_{VC}}}
\newcommand{\Conf}[2]{\textsc{Conf}_{{#1}}({#2})}
\newcommand{\NP}{\ensuremath{\mathsf{NP}}\xspace}
\newcommand{\ncontainment}{\ensuremath{\mathsf{NP\nsubseteq coNP/poly}}\xspace}
\newcommand{\containment}{\ensuremath{\mathsf{NP\subseteq coNP/poly}}\xspace}
\def\namedlabel#1#2{\begingroup
    #2%
    \def\@currentlabel{#2}%
    \phantomsection\label{#1}\endgroup
}
\begin{document}

\maketitle

\begin{abstract}
We revisit the topic of polynomial kernels for \VC relative to structural parameters. Our starting point is a recent paper due to Fomin and Str{\o}mme [WG 2016] who gave a kernel with $\Oh(|X|^{12})$ vertices when $X$ is a vertex set such that each connected component of $G-X$ contains at most one cycle, i.e., $X$ is a modulator to a pseudoforest. We strongly generalize this result by using modulators to $d$-quasi-forests, i.e., graphs where each connected component has a feedback vertex set of size at most $d$, and obtain kernels with $\Oh(|X|^{3d+9})$ vertices. Our result relies on proving that minimal blocking sets in a $d$-quasi-forest have size at most $d+2$. This bound is tight and there is a related lower bound of $\Oh(|X|^{d+2-\epsilon})$ on the bit size of kernels.

In fact, we also get bounds for minimal blocking sets of more general graph classes: For $d$-quasi-bipartite graphs, where each connected component can be made bipartite by deleting at most $d$ vertices, we get the same tight bound of $d+2$ vertices. For graphs whose connected components each have a vertex cover of cost at most $d$ more than the best fractional vertex cover, which we call $d$-quasi-integral, we show that minimal blocking sets have size at most $2d+2$, which is also tight. Combined with existing randomized polynomial kernelizations this leads to randomized polynomial kernelizations for modulators to $d$-quasi-bipartite and $d$-quasi-integral graphs. There are lower bounds of $\Oh(|X|^{d+2-\epsilon})$ and $\Oh(|X|^{2d+2-\epsilon})$ for the bit size of such kernels.
\end{abstract}

\section{Introduction}

The \VC problem plays a central role in parameterized complexity. In particular, it has been very important for the development of new kernelization techniques and the study of structural parameters. As a result of this work, there is now a solid understanding of which parameterizations of \VC lead to fixed-parameter tractability or existence of a polynomial kernelization. This is motivated by the fact that parameterization by solution size leads to large parameter values on many types of easy instances. Thus, while there is a well-known kernelization for instances of \VC{}($k$) to at most $2k$ vertices, it may be more suitable to apply a kernelization with a size guarantee that is a larger function but depends on a smaller parameter.

Jansen and Bodlaender~\cite{JansenB13} were the first to study kernelization for \VC under different, smaller parameters. Their main result is a polynomial kernelization to instances with $\Oh(|X|^3)$ vertices when $X$ is a feedback vertex set of the input graph, also called a \emph{modulator} to the class of forests. Clearly, the size of $X$ is a lower bound on the vertex cover size (as any vertex cover is a modulator to an independent set). Since then, their result has been generalized and complemented in several ways. The two main directions of follow-up work are to use modulators to other tractable cases instead of forests (see below) and parameterization above lower bounds (see related work). 

For any graph class $\C$, we can define a parameterization of \VC by distance to $\C$, i.e., by the minimum size of a modulator $X$ such that $G-X$ belongs to $\C$. For fixed-parameter tractability and kernelization of the arising parameterized problem it is necessary that \VC is tractable on inputs from $\C$. For hereditary classes $\C$, this condition is also sufficient for fixed-parameter tractability but not necessarily for the existence of a polynomial kernelization. Interesting choices for $\C$ are various well-studied hereditary graph classes, like forests, bipartite, or chordal graphs, and graphs of bounded treewidth, bounded treedepth, or bounded degree.

Majumdar et al.~\cite{MajumdarRS15} studied \VC parameterized by (the size of) a modulator $X$ to a graph of maximum degree at most $d$. For $d\geq 3$ this problem is \NP-hard but for $d=2$ and $d=1$ they obtained kernels with $\Oh(|X|^5)$ and $\Oh(|X|^2)$ vertices, respectively. Their result motivated Fomin and Str{\o}mme~\cite{FominS16} to investigate a parameter that is smaller than both a modulator to degree at most two and the size of a feedback vertex set: They consider $X$ being a modulator to a \emph{pseudoforest}, i.e., with each connected component of $G-X$ having at most one cycle. For this they obtain a kernelization to $\Oh(|X|^{12})$ vertices, generalizing (except for the size) the results of Majumdar et al.~\cite{MajumdarRS15} and Jansen and Bodlaender~\cite{JansenB13}. They also prove that the parameterization by a modulator to so-called \emph{mock forests}, where no cycles share a vertex, admits no polynomial kernelization unless \containment (and the polynomial hierarchy collapses).

For their kernelization, Fomin and Str{\o}mme~\cite{FominS16} prove that minimal blocking sets in a pseudoforest have size at most three, which requires a lengthy proof. (A minimal blocking set is a set of vertices whose deletion decreases the independence number by exactly one.)\footnote{Like previous work~\cite{JansenB13,FominS16} we prefer to work with \IS rather than \VC, but this makes no important difference.} This allows to reduce the number of components of the pseudoforest such that one can extend the modulator $X$ to a sufficiently small feedback vertex set by adding one (cycle) vertex per component to $X$. At this point, the kernelization of Jansen and Bodlaender~\cite{JansenB13} can be applied to get the result.

The results of Fomin and Str\o{}mme~\cite{FominS16} suggest that the border for existence of polynomial kernels for feedback vertex set-like parameters may be much more interesting than expected previously. Arguably, there is still quite some room between allowing a single cycle per component and allowing an arbitrary number of cycles so long as they share no vertices. Do larger numbers of cycles per component still allow a polynomial kernelization? Similarly, cycles in the lower bound proof have odd length and it is known that absence of odd cycles is sufficient, i.e., a kernelization for modulators to bipartite graphs is known. Could this be extended to allowing bipartite graphs with one or more odd cycles per connected component?

\subparagraph{Our work.}
We show that the answers to the above questions are largely positive and provide, essentially, a single elegant proof to cover them. To this end, it is convenient to take the perspective of feedback sets rather than the maximum size of a cycle packing. Say that a \emph{$d$-quasi-forest} is a graph such that each connected component has a feedback vertex set of size at most $d$, whereas in a \emph{$d$-quasi-bipartite graph} each connected component must have an odd cycle transversal (a feedback set for odd cycles) of size at most $d$.

We show that \VC admits a kernelization with $\Oh(|X|^{3d+9})$ vertices when $X$ is a modulator to a $d$-quasi-forest (Section \ref{section::d_forest}). The case for $d=1$ strengthens the result of Fomin and Str{\o}mme~\cite{FominS16} (as one cycle per component is stricter than feedback vertex set size one). For every fixed larger value of $d$ we obtain a polynomial kernelization, though of increasing size. The result is obtained by proving that minimal blocking sets in a $d$-quasi-forest have size at most $d+2$ (and then applying~\cite{JansenB13}). Intuitively, having a large minimal blocking set implies getting a fairly small maximum independent set because there are optimal independent sets that avoid all but any chosen vertex of a minimal blocking set. In contrast, a $d$-quasi-forest always has a large independent set because each connected component is almost a tree.

The value $d+2$ is tight already for cliques of size $d+2$, which are permissible connected components in a $d$-quasi-forest. Such cliques also imply that our parameterization inherits a lower bound of $\Oh(|X|^{d+2-\epsilon})$ from the lower bound of $\Oh(|X'|^{r-\epsilon})$ (assuming \ncontainment) for $X'$ being a modulator to a cluster graph with component size at most $r$~\cite{MajumdarRS15}.

It turns out that our proof directly extends also to $d$-quasi-bipartite graphs, proving that their minimal blocking sets similarly have size at most $d+2$ (Section \ref{section::small_blocking_sets}). Thus, when given a modulator $X$ such that $G-X$ is $d$-quasi-bipartite, we can extend it to an odd cycle transversal $X'$ of size at most $d \cdot |X|^{d+3} + |X|$, which directly yields a randomized polynomial kernel by using a randomized polynomial kernelization for \VC parameterized by an odd cycle transversal~\cite{KratschW12}.

Motivated by this, we explore also modulators to graphs in which each connected component has vertex cover size at most $d$ plus the size of a minimum fractional vertex cover, which we call $d$-quasi-integral (Section \ref{section::small_blocking_sets}). This is stronger than the previous parameter because it allows connected components that have an odd cycle transversal of size at most $d$. We show that minimal blocking sets in any $d$-quasi-integral graph have size at most $2d+2$. This bound is tight, as witnessed by the cliques with $2d+2$ vertices, and the problem inherits a lower bound of $\Oh(|X|^{2d+2-\epsilon})$ from the lower bound for modulators to cluster graphs with clique size at most $r=2d+2$~\cite{MajumdarRS15}. Using the upper bound of $2d+2$ one can remove redundant connected components until the obtained instance has vertex cover size at most $d \cdot |X|^{2d+3} + |X|$ more than the best fractional vertex cover. In other words, one can reduce to an instance of \VC parameterized above LP with parameter value $d \cdot |X|^{2d+3} + |X|$ and apply the randomized polynomial kernelization of Kratsch and Wahlstr\"om~\cite{KratschW12} to get a randomized polynomial kernel.

\subparagraph{Related work.}
Recent work of Bougeret and Sau~\cite{BougeretS16} shows that \VC admits a kernel of size $\Oh(|X|^{f(c)})$ when $X$ is a modulator to a graph of treedepth at most $c$. Their result is incomparable to ours: Already the kernelization by feedback vertex set size~\cite{JansenB13}, which we generalize, allows arbitrarily long paths in $G-X$; such paths are forbidden in a graph of bounded treedepth. Conversely, taking a star with $d$ leaves and appending a $3$-cycle at each leaf yields a graph with feedback vertex set and odd cycle transversal size equal to $d$ but constant treedepth; $d$ can be chosen arbitrarily large.

The fact that deciding whether a graph $G$ has a vertex cover of size at most $k$ is trivial when $k$ is lower than the size $MM(G)$ of a largest matching in $G$ has motivated the study of above lower bound parameters like $\ell=k-MM(G)$. The strongest lower bound employed so far is $2LP(G)-MM(G)$, where $LP(G)$ denotes the minimum cost of a fractional vertex cover, and Garg and Philip~\cite{GargP16} gave an $\Oh^*(3^{k-(2LP(G)-MM(G))})$ time algorithm. Randomized polynomial kernels are known for parameters $k-MM(G)$ and $k-LP(G)$~\cite{KratschW12} and for parameter $k-(2LP(G)-MM(G))$~\cite{Kratsch16}. Our present kernelizations are not covered even by the strongest parameter $k-(2LP(G)-MM(G))$ because already $d$-quasi-forests for any $d\geq 2$ can have a vertex cover size that is arbitrarily larger than $k-(2LP(G)-MM(G))$: Consider, for example, a disjoint union of cliques $K_4$ with four vertices each, where $2LP(K_4)-MM(K_4)=2$ but vertex cover size is three per component.

Regarding lower bounds for kernelization (all assuming \ncontainment), it is of course well known that there are no polynomial kernels for \VC when parameterized by width parameters like treewidth, pathwidth, or treedepth (cf.~\cite{BodlaenderDFH09}). Lower bounds similar to the one for modulators to mock forests by Fomin and Str{\o}mme~\cite{FominS16} were already obtained by Cygan et al.~\cite{CyganLPPS14} (modulators to treewidth at most two) and Jansen~\cite{Jansen13thesis} (modulators to outerplanar graphs). Bodlaender et al.~\cite{BodlaenderJK14} showed that there is no polynomial kernelization in terms of the vertex deletion distance to a single clique, which is stronger than distance to cluster or perfect graphs for example. Majumdar et al.~\cite{MajumdarRS15} ruled out kernels of size $\Oh(|X|^{r-\epsilon})$ when $X$ is a modulator to a cluster graph with cliques of size bounded by $r$.

\section{Preliminaries and notation} \label{section::notation}

\subparagraph{Graphs.} We use standard notation mostly following Diestel \cite{Diestel12_book}. Let $G=(V,E)$ be a graph. For a set $X \subseteq V$, let $N_G(X)$ denote the neighborhood of $X$ in $G$, i.e., $N_G(X)=\{ v \in V \setminus X \mid \exists u \in X \colon \{u,v\} \in E\}$ and let $N_G[X]$ denote the neighborhood of $X$ in $G$ including $X$, i.e., $N_G[X]=N_G(X) \cup X$. We omit the subscript whenever the underlying graph is clear from the context. Furthermore, we use $G-X$ as shorthand for $G[V \setminus X]$. For a graph $G$ we denote by $\mathrm{vc}(G)$ the vertex cover number of $G$ and by $\alpha(G)$ the independence number of $G$. Let $Y \subseteq V$, we call $Y$ a \emph{blocking set} of $G$, if deleting the vertex set $Y$ from the graph $G$ decreases the size of a maximum independent set, hence if $\alpha(G) > \alpha(G-Y)$. A blocking set $Y$ is \emph{minimal}, if no proper subset $Y'\subsetneq Y$ of $Y$ is a blocking set of $G$. We denote by $K_n$ the clique of size $n$.

\subparagraph{Linear Programming.} We denote the linear program relaxation for \VC resp.\ \IS for a graph $G=(V,E)$ by $\VCLP(G)$ resp.\ $\ISLP(G)$. Recall that $\VCLP(G)=\min\{ \sum_{v \in V} x_v \mid \forall \{u,v\} \in E \colon x_u + x_v \geq 1 \wedge \forall v\in V \colon 0 \leq x_v \leq 1\}$ and $\ISLP(G)=\max\{\sum_{v \in V} x_v \mid \forall \{u,v\} \in E \colon x_u+x_v \leq 1 \wedge \forall v \in V \colon 0\leq x_v \leq 1 \}$. A \emph{feasible solution} to one of the above linear program relaxations is an assignment to the variables $x_v$ for all vertices $v \in V$ which satisfies the conditions of the linear program. An optimum solution to $\VCLP(G)$ resp.\ $\ISLP(G)$ is a feasible solution $x$ which minimizes resp.\ maximizes the objective function value $w(x):=\sum_{v \in V} x_v$. It follows directly from the definition that $x$ is a feasible solution to $\VCLP(G)$ if and only if $x'=1-x$ is a feasible solution to $\ISLP(G)$; thus $w(x') = |V|-w(x)$. It is well known that there exists an optimum feasible solution $x$ to $\VCLP(G)$ with $x_v \in \{0,\frac{1}{2},1\}$; we call such a solution \emph{half integral}. The same is, of course, true for $\ISLP(G)$. Given a half integral solution $x$ (to $\VCLP(G)$ or $\ISLP(G)$), we define $V_i^x = \{v \in V \mid x_v=i\}$ for each $i \in \{0,\frac{1}{2},1\}$.
Note that if $x$ is an optimum half integral solution to $\VCLP(G)$, then it holds that $N(V_0^x)=V_1^x$, whereas, it holds that $N(V_1^x)=V_0^x$, when $x$ is an optimum half integral solution to $\ISLP(G)$. We omit the subscript $x$, when the solution $x$ is clear from the context.

\section{Vertex Cover parameterized by a modulator to a $d$-quasi-forest} \label{section::d_forest}

In this section we present a polynomial kernel for \VC parameterized by a modulator to a $d$-quasi-forest. More precisely, we develop a polynomial kernel for \IS parameterized by a modulator to a $d$-quasi-forest which, by the relation between these two problems, directly yields a polynomial kernel for \VC parameterized by a modulator to a $d$-quasi-forest. 

Consider an instance $(G,X,k)$ of the problem, which asks whether graph $G$, with $G-X$ is a $d$-quasi-forest, has an independent set of size $k$. 
Like Fomin and Str{\o}mme~\cite{FominS16}, we reduce the input instance $(G,X,k)$ until the $d$-quasi-forest $G-X$ has at most polynomially many connected components in terms of $|X|$; see Reduction Rule \ref{rule::delete_cc}. By adding for each component of the $d$-quasi-forest a feedback vertex set of size $d$ to the modulator $X$, we polynomially increase the size of the modulator $X$. The resulting modulator is a feedback vertex set, hence we can apply the polynomial kernelization for \IS parameterized by a modulator to a feedback vertex set from Jansen and Bodlaender~\cite{JansenB13}.

Let $(G,X,k)$ be an instance of \IS parameterized by a modulator to a $d$-quasi-forest. Since $d$ is a constant we can compute in polynomial time a maximum independent set in $G-X$. 
Choosing some vertices from the set $X$ to be in an independent set will prevent some vertices in $G-X$ to be part of the same independent set; thus it may be that we can add less than $\alpha(G-X)$ vertices from $G-X$ to an independent set that contains some vertices of $X$. To measure this difference, we use the term of conflicts introduced by Jansen and Bodlaender~\cite{JansenB13}. Our definition is more general in order to use it also for modulators to $d$-quasi-bipartite resp.\ $d$-quasi-integral graphs.

\begin{definition}[Conflicts]
    Let $G=(V,E)$ be a graph and $X \subseteq V$ be a subset of $V$, such that we can compute a maximum independent set in $G-X$ in polynomial time. Let $F$ be a subgraph of $G-X$ and let $X' \subseteq X$.
    We define the number of conflicts on $F$ which are induced by $X'$ as $\Conf{F}{X'}: =\alpha(F)-\alpha(F-N(X'))$.
\end{definition}

Now we can state our reduction rule, which deletes some connected components of the $d$-quasi-forest $G-X$. More precisely, we delete connected components $H$ of which we know that there exists a maximum independent set in $G$ that contains a maximum independent set of the connected component $H$.

\begin{description}
    \item[Reduction Rule \namedlabel{rule::delete_cc}{1}:] If there exists a connected component $H$ of $G-X$ such that for all independent sets $X_I \subseteq X$ of size at most $d+2$ with $\Conf{H}{X_I}>0$ it holds that $\Conf{G-H-X}{X_I} \geq |X|$, then delete $H$ from $G$ and reduce $k$ by $\alpha(H)$.
\end{description}

The proof of safeness will be given in the sequel.
In particular, we delete connected components that have no conflicts.
The goal of Reduction Rule \ref{rule::delete_cc} is to delete connected components of the $d$-quasi-forest $G-X$ such that we can bound the number of connected components by a polynomial in the size of $X$. Thus, if we cannot apply this reduction rule any more we should be able to find a good bound for the number of connected components in the $d$-quasi-forest $G-X$. The following lemma yields such a bound.

\begin{lemma} \label{lemma::bound_cc}
    Let $(G,X,k)$ be an instance of \IS parameterized by a modulator to a $d$-quasi-forest where Reduction Rule \ref{rule::delete_cc} is not applicable. Then the number of connected components in $G-X$ is at most $|X|^{d+3}$.
\end{lemma}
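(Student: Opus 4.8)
The plan is to charge every connected component of $G-X$ to a small ``witness'' subset of $X$ produced by the failure of Reduction Rule \ref{rule::delete_cc}, and then to show that only few components can be charged to the same witness. First I would unfold what non-applicability of the rule means: for \emph{every} connected component $H$ of $G-X$ there is an independent set $X_I = X_I(H) \subseteq X$ with $|X_I| \le d+2$, $\Conf{H}{X_I} > 0$, and $\Conf{G-H-X}{X_I} < |X|$. Such a witness is necessarily non-empty, since $\Conf{H}{\emptyset} = \alpha(H) - \alpha(H) = 0$; and since conflict numbers are integers, $\Conf{H}{X_I(H)} \ge 1$.

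The key structural fact is that conflict numbers are additive over disjoint subgraphs: if $F$ is the disjoint union of $F_1$ and $F_2$, then $\alpha(F) = \alpha(F_1)+\alpha(F_2)$ and likewise for $F - N(X')$, so $\Conf{F}{X'} = \Conf{F_1}{X'} + \Conf{F_2}{X'}$, with each summand non-negative because removing vertices cannot increase the independence number. Since $G-H-X$ is precisely the disjoint union of all connected components of $G-X$ other than $H$, this gives $\Conf{G-H-X}{X_I} = \sum_{H'\neq H}\Conf{H'}{X_I}$ for every $X_I$.

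Now I would group the components by witness: for a candidate witness $X_I$ (a non-empty independent subset of $X$ of size at most $d+2$) let $\mathcal{H}_{X_I}$ be the set of components $H$ with $X_I(H)=X_I$. Fixing any single $H\in\mathcal{H}_{X_I}$ and keeping only the contributions of the other members of $\mathcal{H}_{X_I}$ in the sum above, each of which is at least $1$, yields $|\mathcal{H}_{X_I}|-1 \le \Conf{G-H-X}{X_I} < |X|$, hence $|\mathcal{H}_{X_I}| \le |X|$. Summing over all candidate witnesses, the number of components of $G-X$ is at most $|X|$ times the number of non-empty subsets of $X$ of size at most $d+2$, i.e., at most $|X|\cdot\sum_{i=1}^{d+2}\binom{|X|}{i} \le |X|\cdot|X|^{d+2} = |X|^{d+3}$.

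I do not expect a genuine obstacle; the points needing care are (i) negating the quantifiers in Reduction Rule \ref{rule::delete_cc} correctly, so that every component really does receive a witness (here it matters that the empty set contributes no conflicts and hence is never a witness), (ii) the additivity and non-negativity of $\Conf{\cdot}{\cdot}$ used to split $\Conf{G-H-X}{X_I}$ into per-component contributions, and (iii) the elementary estimate $\sum_{i=1}^{d+2}\binom{|X|}{i}\le|X|^{d+2}$, which holds for all $|X|\ge 1$ precisely because the empty set is excluded from the count.
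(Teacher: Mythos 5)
Your proposal is correct and follows essentially the same argument as the paper: unfold the failure of Reduction Rule \ref{rule::delete_cc} to assign each component a witness set $X_I$ of size at most $d+2$, use additivity of conflicts over components to show at most $|X|$ components share a witness, and multiply by the number of candidate witnesses. Your explicit handling of the empty witness (needed for the estimate $\sum_{i=1}^{d+2}\binom{|X|}{i}\le|X|^{d+2}$) is a minor refinement of the paper's counting but does not change the approach.
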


\begin{proof}
    Let $H$ be a connected component of the $d$-quasi-forest $G-X$. Since Reduction Rule \ref{rule::delete_cc} is not applicable, there exists an independent set $X_I \subseteq X$ of size at most $d+2$ such that $\Conf{H}{X_I} > 0$ and $\Conf{G-H-X}{X_I} < |X|$; otherwise Reduction Rule \ref{rule::delete_cc} would delete $H$ (or another connected component with the same properties).
    
    Observe, that there are at most $|X|$ connected components of the $d$-quasi-forest $G-X$ that have a conflict with an independent set $X_I \subseteq X$, when $X_I$ is the reason that we cannot apply Reduction Rule \ref{rule::delete_cc} to one of these connected components: Assume for contradiction that there are $p > |X|$ connected components $H_1, H_2, \ldots, H_p$ of the $d$-quasi-forest $G-X$ that have a conflict with the same independent set $X_I \subseteq X$ of size at most $d+2$; therefore it holds that $\Conf{H_i}{X_I} > 0$ for all $i \in \{1,2,\ldots, p\}$. But now, for all $i \in \{1,2,\ldots, p\}$
    \begin{align*}
        \Conf{G-H_i-X}{X_I} \geq \sum_{\overset{j=1}{j \neq i}}^p \Conf{H_j}{X_I} \geq p-1 \geq |X| \text{,} 
    \end{align*}
    where the first inequality corresponds to summing over some connected components of $G-H_i-X$.
    Thus, $X_I$ could not be the reason why the connected components $H_1, H_2, \ldots, H_p$ are not reduced during Reduction Rule \ref{rule::delete_cc}.
         
    This leads to the claimed bound of at most $\binom{|X|}{\leq d+2} \cdot |X| \leq |X|^{d+3}$ connected components in $G-X$, because for every independent set $X_I \subseteq X$ of size at most $d+2$ there are at most $|X|$ connected components for which $X_I$ is the reason that we cannot apply Reduction Rule~\ref{rule::delete_cc}. 
\end{proof}

It remains to show that Reduction Rule \ref{rule::delete_cc} is safe; i.e.\ that there exists a solution for $(G,X,k)$ if and only if there exists a solution for $(G',X,k')$, where $G'=G-H$, $k'=k-\alpha(H)$ and $H$ is the connected component of $G-X$ we delete during Reduction Rule \ref{rule::delete_cc}.
The main ingredient for this is to prove that any minimal blocking set has size at most $d+2$ (Lemma~\ref{lemma::conflict_size}).
To bound the size of minimal blocking sets we need the existence of a half integral solution $x$ to $\ISLP(G-Y)$ for which \emph{every} maximum independent set $I$ in $G-Y$ fulfills $V_1 \subseteq I \subseteq V_\frac{1}{2} \cup V_1$. This is similar to the result of Nemhauser and Trotter~\cite{NemhauserT75} and other results about the connection between maximum independent sets (resp.\ minimum vertex covers) and their fractional $\mathrm{LP}$ solutions~\cite{Abu-KhzamFLS07, boros2002, chlebik2008crown, hammer1982}.

\begin{lemma} \label{lemma::LP_solution}
    Let $G=(V,E)$ be an undirected graph. 
    There exists an optimum half integral solution $x \in \{0,\frac{1}{2},1\}^{|V|}$ to $\ISLP(G)$ such that for all maximum independent sets $I$ in $G$ it holds that $V^x_1 \subseteq I \subseteq V \setminus V^x_0$.
\end{lemma}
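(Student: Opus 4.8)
The plan is to start from an arbitrary optimum half-integral solution $x$ to $\ISLP(G)$ (which exists by the well-known half-integrality of the vertex cover / independent set LP) and then to modify it on the set $V_0^x \cup V_1^x$ so that this integral part becomes ``forced'' for every maximum independent set. First I would recall the basic structural facts about an optimum half-integral LP solution recorded in the preliminaries: $N(V_1^x) = V_0^x$, and the three sets $V_0^x, V_{1/2}^x, V_1^x$ partition $V$. The key observation is that $V_1^x$ together with \emph{any} maximum independent set of the subgraph $G[V_{1/2}^x]$ yields a maximum independent set of $G$, and conversely that every maximum independent set of $G$ must, up to the standard Nemhauser--Trotter-style argument, pick up exactly $|V_1^x| + \alpha(G[V_{1/2}^x])$ vertices, split as $V_1^x$ on the integral part and a maximum independent set on $V_{1/2}^x$. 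So the ``$1$-part'' is already essentially forced; the real work is to arrange that the ``$0$-part'' is never chosen.

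The natural way to do this is to pick, among all optimum half-integral solutions, one that \emph{minimizes} $|V_0^x| + |V_1^x|$ (equivalently, maximizes the size of the half-integral region $V_{1/2}^x$). I would then argue by contradiction: suppose some maximum independent set $I$ of $G$ violates the claim, i.e.\ either $V_1^x \not\subseteq I$ or $I \cap V_0^x \neq \emptyset$. In either case I want to produce a new optimum half-integral solution $x'$ with a strictly larger half-integral region, contradicting minimality. Concretely, if $v \in I \cap V_0^x$, consider flipping $v$ from $0$ to $1/2$ and, to keep feasibility, flipping its neighbours in $V_1^x$ down to $1/2$ as well; one checks that this preserves feasibility (the only constraints that could break involve edges inside $V_0^x$, which are absent since $V_0^x$ is independent by $N(V_1^x)=V_0^x$, or edges from the affected $V_1^x$ vertices, whose other endpoints lie in $V_0^x$) and does not change the objective value, because the number of $V_1^x$-neighbours of $v$ equals one (or can be shown to equal one using that $I$ is a \emph{maximum} independent set containing $v$ — if $v$ had two neighbours in $V_1^x$ the move would even improve the objective, contradicting optimality). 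The case $V_1^x \not\subseteq I$ is handled symmetrically, or reduced to the previous case by a local exchange argument between $I$ and the canonical maximum independent set $V_1^x \cup (\text{MIS of } G[V_{1/2}^x])$.

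The step I expect to be the main obstacle is verifying carefully that the local flip can always be performed \emph{without changing the objective value} and \emph{without destroying half-integrality} — in particular pinning down that the offending vertex $v \in I \cap V_0^x$ has exactly one neighbour in $V_1^x$ (rather than more, which would make the flip decrease the objective and hence be impossible for an \emph{optimum} $x$, or would require a more global swap). This is where one must genuinely use that $I$ is an independent set of maximum size and invoke an augmenting/exchange argument rather than a purely LP-based one. Once this local modification is shown to be objective-preserving and feasibility-preserving, the minimality of $|V_0^x| + |V_1^x|$ gives the contradiction, and hence the chosen $x$ satisfies $V_1^x \subseteq I \subseteq V \setminus V_0^x$ for every maximum independent set $I$, as claimed.
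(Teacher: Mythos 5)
Your choice of extremal solution (an optimum half-integral $x$ maximizing $V_{1/2}^x$) is exactly the paper's, and your handling of the case $V_1^x\nsubseteq I$ (reduce to a vertex of $I\cap V_0^x$ via a neighbour, using $N(V_1^x)\subseteq V_0^x$) is also fine. But the core step --- ruling out $I\cap V_0^x\neq\emptyset$ --- has a genuine gap, and you correctly identified it as the obstacle without resolving it. Your pivotal claim that the offending vertex $v\in I\cap V_0^x$ has \emph{exactly one} neighbour in $V_1^x$ is false; in fact the opposite holds. Precisely because $V_{1/2}^x$ is maximal, every nonempty $V_0'\subseteq V_0^x$ must satisfy $|N(V_0')\cap V_1^x|>|V_0'|$ (otherwise reassigning the value $\frac12$ to all of $V_0'\cup(N(V_0')\cap V_1^x)$ gives a feasible solution of at least the same objective value with strictly larger half-integral part). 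Taking $V_0'=\{v\}$ shows $v$ has at least \emph{two} neighbours in $V_1^x$, and then your proposed flip (setting $v$ and its $V_1^x$-neighbours to $\frac12$) strictly \emph{decreases} the objective, so it produces no contradiction with the optimality or extremality of $x$. More fundamentally, you are aiming the contradiction at the wrong object: from the hypothesis that a maximum independent set $I$ meets $V_0^x$ one should not try to build a better LP solution, but a better independent set.

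The missing idea is the set-wise expansion property above used as a combinatorial exchange inside $I$: with $V_0'=I\cap V_0^x$, replace $V_0'$ by $N(V_0')\cap V_1^x$ in $I$. The resulting set is independent (since $V_1^x$ is independent, $N(V_1^x)\subseteq V_0^x$, and the new set avoids $V_0^x$) and strictly larger because $|N(V_0')\cap V_1^x|>|V_0'|$, contradicting the maximality of $I$. Note also that your opening paragraph asserts, for an arbitrary optimum half-integral $x$, that every maximum independent set splits as $V_1^x$ plus a maximum independent set of $G[V_{1/2}^x]$; that is essentially the statement to be proved and is false without the extremal choice (for a single edge $\{u,v\}$, the solution $x_u=1$, $x_v=0$ is optimum, yet $\{v\}$ is a maximum independent set avoiding $V_1^x$). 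So the claim must be derived, not assumed, and the derivation is precisely the expansion-plus-exchange argument you are missing.
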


\begin{proof}
    Let $x \in \{0,\frac{1}{2},1\}^{|V|}$ be an optimum half integral solution to $\ISLP(G)$, such that $V^x_\frac{1}{2}$ is maximal; this means, that there exists no optimum half integral solution $x' \neq x$ to $\ISLP(G)$ such that $V_\frac{1}{2}^{x} \subsetneq V_\frac{1}{2}^{x'}$. We will show that every independent set $I$ in $G$ with $V^x_1 \nsubseteq I$ or $V^x_0 \cap I \neq \emptyset$ is not a maximum independent set in $G$.
    
    First, we observe that for all subsets $V_0' \subseteq V^x_0$ it must hold that the size of the neighborhood of $V_0'$ in $V^x_1$ is larger than the size of $V_0'$, i.e.\ $|V^x_1 \cap N(V_0')|>|V_0'|$; if this is not the case, then we can construct an optimum half integral solution $x'$ to $\ISLP(G)$ with $V^x_\frac{1}{2} \subsetneq V^{x'}_\frac{1}{2}$ (which contradicts the fact that $V_\frac{1}{2}^x$ is maximal), by assigning a value of $\frac{1}{2}$ to all vertices in $(V^x_1 \cap N(V_0')) \cup V_0'$.
    Obviously, it holds that $V_\frac{1}{2}^x \subsetneq V_\frac{1}{2}^{x'}$ and that 
    \begin{align*}
        w(x')=w(x)-|V^x_1 \cap N(V_0')| + \frac{1}{2} (|V^x_1 \cap N(V_0')|+|V_0'|) \geq w(x) \text{.}
    \end{align*}
    In order to show that $x'$ is indeed a feasible solution to $\ISLP(G)$, it suffices to consider edges $\{u,v\}$ of $G$ that have at least one endpoint in $V_0'$, say $v\in V_0'$, because these are the only vertices for which we increase the value of the half integral solution $x$ to obtain $x'$. Since $x'_v=\frac12$, the constraint $x'_u+x'_v\leq 1$ can only be violated if $x'_u = 1$. But then $x_u=1$ must hold since the only changed values are $\frac12$ in $x'$. This of course means that $u\in V_1^x\cap N(V_0')$ and $x'_u=\frac12$; a contradiction.
    
    Now, we assume that there exists a maximum independent set $I$ that contains a vertex of the set $V^x_0$. Let $V_0'=V^x_0 \cap I \neq \emptyset$. We will show that deleting the set $V_0'$ from the independent set $I$ and adding the set $N(V_0') \cap V^x_1$ to the independent set $I$ leads to a larger independent $I'$ of $G$, i.e.\ $I'=I \setminus V_0' \cup (N(V_0')\cap V^x_1)$. 
    First we show that $I'$ has larger cardinality than $I$. Since $I$ is an independent set, we know that $(N(V_0')\cap V^x_1) \cap I = \emptyset$ and hence that the cardinality of $I'$ is $|I|-|V_0'|+|N(V_0')\cap V^x_1|$. From the above observation, we know that $|N(V_0')\cap V^x_1|> |V_0'|$ and it follows that $I'$ has larger cardinality than $I$. To prove that $I'$ is an independent set in $G$, it is enough to show that any vertex $v \in N(V_0')\cap V^x_1$ has no neighbor in $I'$; this holds because $V^x_1$ is an independent set, $N(V^x_1) \subseteq V^x_0$ and $V^x_0 \cap I'= \emptyset$. Thus, $I'$ is an independent set which has larger cardinality than $I$; this contradicts the assumption that $I$ is a maximum independent set.
    
    It remains to show that there exists no maximum independent set $I$ in $G$ with $V^x_1 \nsubseteq I \subseteq V^x_1 \cup V^x_\frac{1}{2}$. Let $v \in V^x_1 \setminus I$. Since $I$ is a maximum independent set, there exists a vertex $w \in N(V^x_1) \cap I$ (otherwise $I \cup \{v\}$ would be a larger independent set in $G$). But $N(V^x_1) \subseteq V^x_0$ and hence $w \in V^x_0 \cap I$, which contradicts the assumption that $I \subseteq V^x_1 \cup V^x_\frac{1}{2}$.
\end{proof}

Using the above lemma, we can show that every minimal blocking set in a $d$-quasi-forest has size at most $d+2$. This generalizes the result of Fomin and Str{o}mme \cite{FominS16}, who showed that a minimal blocking set in a pseudoforest has size at most three. Furthermore, we can show that this bound is tight. 

\begin{theorem} \label{theorem::lower_bound_FVS}
    Minimal blocking sets have a tight upper bound of $d+2$ in $d$-quasi-forests.
\end{theorem}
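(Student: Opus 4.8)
The plan is to prove the two halves separately and to make the upper bound rest on two applications of Lemma~\ref{lemma::LP_solution}. For tightness I would simply exhibit the clique $K_{d+2}$: its feedback vertex set number is exactly $d$ (removing any $d$ vertices leaves $K_2$, removing $d-1$ leaves a triangle), so it is a $d$-quasi-forest; since $\alpha(K_{d+2})=1$ its whole vertex set is a blocking set, and it is minimal because deleting any single vertex leaves a graph of independence number still $1$. So $V(K_{d+2})$ is a minimal blocking set of size $d+2$.

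For the upper bound, let $Y$ be a minimal blocking set of a $d$-quasi-forest $G$. Since $\alpha(\cdot)$ is additive over connected components, $Y$ lies inside one component, so I may assume $G$ is connected; let $S$ be a feedback vertex set with $|S|=d'\le d$, so $G-S$ is a forest and hence $\alpha(G)\ge\alpha(G-S)\ge(|V(G)|-d')/2$ (if $V(G)=S$ then $|Y|\le d'$ and we are done, so assume $G-S\ne\emptyset$). Minimality yields, for every $v\in Y$, a maximum independent set $I_v$ of $G$ with $I_v\cap Y=\{v\}$ (a maximum independent set of $G-(Y\setminus\{v\})$ has size $\alpha(G)$ and must meet $Y$, hence contains $v$), and in particular $\alpha(G-Y)=\alpha(G)-1$. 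Applying Lemma~\ref{lemma::LP_solution} to $G$ and using $v\in I_v$ gives $Y\cap V_0^x=\emptyset$. If $Y$ meets $V_1^x$ then $V_1^x\subseteq I_v$ forces $Y\cap V_1^x=\{v\}$ for $v\in Y\cap V_1^x$, while the same inclusion for a hypothetical $w\in Y\cap V_{1/2}^x$ forces $Y\cap V_1^x=\emptyset$; so in that case $Y=\{v\}$ and $|Y|\le 1\le d+2$. Otherwise $Y\subseteq V_{1/2}^x$, and I pass to $G[V_{1/2}^x]$: since the maximum independent sets of $G$ are exactly $V_1^x\cup J$ with $J$ maximum independent in $G[V_{1/2}^x]$, one checks that $Y$ remains a minimal blocking set there, that $\ISLP$ of $G[V_{1/2}^x]$ (and hence of the component containing $Y$) equals half its number of vertices, and that $G[V_{1/2}^x]$ is still a $d$-quasi-forest. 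Thus I may henceforth assume additionally that $G$ is connected and "all-$\tfrac12$", i.e.\ $\ISLP(G)=|V(G)|/2$, so $\alpha(G)\le|V(G)|/2$.

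Now I apply Lemma~\ref{lemma::LP_solution} a second time, to $G-Y$, obtaining $z$. Each $I_v\setminus\{v\}$ is a maximum independent set of $G-Y$, so $V_1^z\subseteq I_v\setminus\{v\}\subseteq I_v$; since $v\in I_v$ and $I_v$ is independent, no vertex of $V_1^z$ is adjacent to any vertex of $Y$, whence $N_G(V_1^z)=N_{G-Y}(V_1^z)=V_0^z$. Consequently the assignment that puts $1$ on $V_1^z$, $0$ on $V_0^z$ and $\tfrac12$ elsewhere is feasible for $\ISLP(G)$ (an edge with an endpoint in $V_1^z$ has its other endpoint in $V_0^z$; all other edges get total at most $1$), and it has value $|V_1^z|+\tfrac12(|V(G)|-|V_1^z|-|V_0^z|)$; comparing with the optimum $|V(G)|/2$ forces $|V_0^z|\ge|V_1^z|$. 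On the other hand the sandwich property of $z$ gives $\alpha(G-Y)=|V_1^z|+\alpha((G-Y)[V_{1/2}^z])\le|V_1^z|+\tfrac12|V_{1/2}^z|$ (the all-$\tfrac12$ assignment being optimal on $(G-Y)[V_{1/2}^z]$), and since $|V(G-Y)|=|V_0^z|+|V_{1/2}^z|+|V_1^z|\ge 2|V_1^z|+|V_{1/2}^z|$ this yields $\alpha(G-Y)\le|V(G-Y)|/2=(|V(G)|-|Y|)/2$. Combining with $\alpha(G-Y)=\alpha(G)-1\ge(|V(G)|-d')/2-1$ gives $(|V(G)|-d')/2-1\le(|V(G)|-|Y|)/2$, i.e.\ $|Y|\le d'+2\le d+2$.

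The step I expect to be the main obstacle is the interaction of the second application of Lemma~\ref{lemma::LP_solution} with the "all-$\tfrac12$" structure: proving that $V_1^z$ is anticomplete to $Y$ (so that $N_G(V_1^z)=V_0^z$) and then extracting the Hall-type inequality $|V_0^z|\ge|V_1^z|$ from optimality of the all-$\tfrac12$ solution of $G$. This is exactly what makes $\alpha(G-Y)\le|V(G-Y)|/2$ hold, after which the cheap forest lower bound $\alpha(G)\ge(|V(G)|-d')/2$ closes the gap. A secondary point needing care is checking that each reduction (to a single component, and then to $G[V_{1/2}^x]$) preserves both being a $d$-quasi-forest and having $Y$ as a minimal blocking set.
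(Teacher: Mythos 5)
Your proof is correct, and its core coincides with the paper's: both rest on Lemma~\ref{lemma::LP_solution} applied to the graph minus the blocking set, both use minimality to produce, for each $v\in Y$, a maximum independent set $I_v$ with $I_v\cap Y=\{v\}$ and conclude that the resulting $V_1^z$ is anticomplete to $Y$, and both finish by playing the half-integral LP upper bound on $\alpha(G-Y)$ against a forest-based lower bound on $\alpha(G)$ together with $\alpha(G)=\alpha(G-Y)+1$. The genuine structural difference is your preliminary normalization: a first application of Lemma~\ref{lemma::LP_solution} to $G$ itself, disposal of the easy case $Y\cap V_1^x\neq\emptyset$, and passage to (a component of) $G[V_{1/2}^x]$ so that $\ISLP(G)=|V(G)|/2$. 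This buys you the clean Hall-type inequality $|V_0^z|\ge|V_1^z|$ (from feasibility of the $1/0/\tfrac12$ assignment built from $z$ measured against the all-$\tfrac12$ optimum) and hence $\alpha(G-Y)\le |V(G-Y)|/2$, which pairs directly with $\alpha(G)\ge(|V(G)|-d')/2$. The paper (Lemma~\ref{lemma::conflict}) skips the normalization entirely: it applies the LP lemma only to $H-Y$, lower-bounds $\alpha(H)$ by the explicit independent set $V_1\cup I_F$ where $I_F$ is a maximum independent set of the forest $H-V_0-V_1-Z$, and lets the common terms $|V_1|+\tfrac12|H-V_0-V_1|$ cancel on both sides; that is shorter and avoids the (correct but fiddly) verifications you rightly flag as needing care, namely that restricting to a component and to $G[V_{1/2}^x]$ preserves both $d$-quasi-forest membership and the minimality of $Y$ as a blocking set. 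Your tightness argument via $K_{d+2}$ is the same as the paper's.
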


The crucial part of Theorem \ref{theorem::lower_bound_FVS} is to prove the upper bound. 

\begin{lemma} \label{lemma::conflict}
    Let $H=(V,E)$ be a $d$-quasi-forest and let $Z$ be a feedback vertex set in $H$ of size at most $d$. Then it holds that a minimal blocking set $Y$ in the $d$-quasi-forest $H$ has size at most $|Z|+2 \leq d+2$.
\end{lemma}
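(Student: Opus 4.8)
The plan is to combine Lemma~\ref{lemma::LP_solution} with an exchange argument, pushing everything into the ``half-integral core'' of $H$. We may assume $|Y|\ge 2$, since $|Y|\le 1\le|Z|+2$ otherwise. Apply Lemma~\ref{lemma::LP_solution} to $H$: there is an optimum half integral $x$ with $V^x_1\subseteq I\subseteq V^x_1\cup V^x_{1/2}$ for every maximum independent set $I$ of $H$. Since $Y$ is a \emph{minimal} blocking set, for each $y\in Y$ the set $Y\setminus\{y\}$ is not blocking, so $H-(Y\setminus\{y\})$ has an independent set $I_y$ of size $\alpha(H)$; this $I_y$ is a maximum independent set of $H$ and must contain $y$ (else $I_y$ avoids all of $Y$, contradicting that $Y$ blocks), hence $I_y\cap Y=\{y\}$. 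For $y\neq y'$ in $Y$ the set $I_{y'}$ avoids $y$, so $y\notin V^x_1$ (as $V^x_1\subseteq I_{y'}$); with $y\in I_y\subseteq V^x_1\cup V^x_{1/2}$ this gives $Y\subseteq V^x_{1/2}$.

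Next I would pass to $G_0:=H[V^x_{1/2}]$ and record: (i) $Z_0:=Z\cap V^x_{1/2}$ is a feedback vertex set of $G_0$ with $|Z_0|\le|Z|$, so $G_0-Z_0$ is a forest; (ii) the all-$\tfrac12$ assignment is optimum for $\ISLP(G_0)$ --- substituting any strictly better $G_0$-solution for the $V^x_{1/2}$-part of $x$ would yield a feasible solution of $\ISLP(H)$ beating $x$ (feasible because $N(V^x_1)\subseteq V^x_0$ leaves no edges between $V^x_1$ and $V^x_{1/2}$), a contradiction --- so $\alpha(G_0)\le\tfrac12|V(G_0)|$ and $G_0$ has a fractional perfect matching; (iii) since $V^x_1\subseteq I_y\subseteq V^x_1\cup V^x_{1/2}$ and $\alpha(H)=|V^x_1|+\alpha(G_0)$, the sets $J_y:=I_y\cap V^x_{1/2}$ are maximum independent sets of $G_0$ with $J_y\cap Y=\{y\}$, and $Y$ is a minimal blocking set of $G_0$ with $\alpha(G_0-Y)=\alpha(G_0)-1$. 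It remains to bound a minimal blocking set of $G_0$ by $|Z_0|+2$.

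For the core bound I would first prove an exchange lemma for $G_0$: since $J_i,J_j$ are both maximum, $G_0[J_i\triangle J_j]$ carries a perfect matching between $J_i\setminus J_j$ and $J_j\setminus J_i$, and for every connected component $T$ of $G_0[J_i\triangle J_j]$ the set obtained from $J_i$ by swapping its $V(T)$-part for the $J_j$-part of $V(T)$ is again a maximum independent set of $G_0$ (independence because $T$ is an entire component; cardinality preserved by the matching). Choosing $T\ni y_i$, this swapped set meets $Y$ only in $\{y_j\}\cap V(T)$, which must be nonempty as $Y$ blocks $G_0$; hence $y_i,y_j$ always lie in a common component of $G_0[J_i\triangle J_j]$, i.e.\ are joined by a path of $G_0$ inside $J_i\triangle J_j$. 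Now I would induct on $|Z_0|$. For $|Z_0|=0$ the graph $G_0$ is a forest and one shows, via this exchange lemma, that it has no minimal blocking set of size three (three vertices $y_1,y_2,y_3$ pairwise joined by such alternating paths cannot be fitted into a forest without exposing a swap that removes the last vertex of $Y$). For the step one uses that $G_0$ has a fractional perfect matching, hence a half-integral one decomposing it into a perfect matching plus at most $|Z_0|$ vertex-disjoint odd cycles (each meeting $Z_0$, as $G_0-Z_0$ is a forest), to argue that deleting one feedback vertex of $Z_0$ drops the maximum size of a minimal blocking set by at most one. This yields $|Z_0|+2$, which is tight already for $K_{|Z_0|+2}$ (one matched edge and $|Z_0|$ feedback vertices).

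The hard part is the acyclic base case together with the inductive step that reintroducing one feedback vertex costs at most one: it is here that the accumulated structure $\alpha(G_0)\le\tfrac12|V(G_0)|$ --- equivalently, the fractional perfect matching of $G_0$, with its at most $|Z_0|$ odd cycles anchored at $Z_0$ --- must be fed into the exchange lemma and the charging so as to deliver the bound $|Z_0|+2$ rather than something weaker like $2|Z_0|+2$.
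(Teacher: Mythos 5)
There is a genuine gap: the quantitative heart of the lemma is never proved. Your reduction to the half-integral core is sound as far as it goes --- applying Lemma~\ref{lemma::LP_solution} to $H$, deducing $Y\subseteq V^x_{1/2}$, passing to $G_0=H[V^x_{1/2}]$ with $Y$ a minimal blocking set of $G_0$ and the all-$\tfrac12$ solution optimal there, and the alternating-path exchange lemma are all correct. But the bound $|Y|\le|Z_0|+2$ is then delegated to an acyclic base case and an inductive step (``deleting one feedback vertex drops the maximum size of a minimal blocking set by at most one'') that you explicitly label ``the hard part'' and do not carry out. The inductive step is not merely unproved but doubtful as stated: after deleting $z\in Z_0$ the all-$\tfrac12$ solution need no longer be optimal for $G_0-z$ (so your accumulated structure is not inherited), and there is no given correspondence between minimal blocking sets of $G_0$ and of $G_0-z$; note also that a naive LP upper bound on $\alpha(G_0-Y)$ fails in the core, since $\ISLP(G_0-Y)$ can exceed $\tfrac12|V(G_0)|-\tfrac12|Y|$ (already for $G_0$ a single edge and $Y$ one endpoint). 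So the argument, as written, does not deliver $|Z|+2$.

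The paper's proof sidesteps all of this by applying Lemma~\ref{lemma::LP_solution} to $H-Y$ rather than to $H$. Minimality of $Y$ gives $\alpha(H)=\alpha(H-Y)+1$ and, via the sets $I_y$, that $N_H(Y)\cap V_1=\emptyset$. Then two one-line counts suffice: the LP value of $H-Y$ gives $\alpha(H-Y)\le|V_1|+\tfrac12|H-V_0-V_1|-\tfrac12|Y|$ (the $-\tfrac12|Y|$ appears precisely because $Y$ lies outside $V(H-Y)$), while $V_1$ together with half of the forest $H-V_0-V_1-Z$ gives $\alpha(H)\ge|V_1|+\tfrac12|H-V_0-V_1|-\tfrac12|Z|$; combining with $\alpha(H)=\alpha(H-Y)+1$ yields $|Y|\le|Z|+2$. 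If you want to salvage your route, you would need to supply the forest base case and a correct charging argument for the feedback vertices; as submitted, the proof is incomplete.
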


\begin{proof}
    We consider an optimum half integral solution $x$ to $\ISLP(H-Y)$ which fulfills the properties of Lemma \ref{lemma::LP_solution}; let $V_i = \{ v \in V(H-Y) \mid x_v = i\}$ for $i\in \{0, \frac{1}{2},1\}$. We know that every maximum independent set $I$ of $H-Y$ contains the set $V_1$ and no vertex of the set $V_0$ (because $x$ fulfills the properties of Lemma \ref{lemma::LP_solution}).
    
    Observe that for all vertices $y \in Y$ it holds that $\alpha(H - (Y \setminus \{y\})) = \alpha(H)$; otherwise, the set $Y$ would not be a minimal blocking set.
    Furthermore, from the above observation it follows that $\alpha(H)= \alpha(H-Y) + 1$, because 
    \begin{align*}
        \alpha(H-Y) < \alpha(H) = \alpha (H- (Y \setminus \{y\})) \leq \alpha(H-Y) + 1 \text{ for all } y \in Y \text{.}
    \end{align*}    
    The key observation of our proof is that $N_H(Y) \subseteq V_0 \cup V_\frac{1}{2}$; this follows from the fact that $Y$ is minimal:
    As observed above, we know that $\alpha(H - (Y \setminus \{y\})) = \alpha(H)$. Thus, for all vertices $y \in Y$ there exists a maximum independent set $I_y$ in $H$ that contains the vertex $y$ and no other vertex from the set $Y$. Consider the sets $I'_y = I_y \setminus \{y\}$ for all vertices $y \in Y$. Obviously, the sets $I'_y$ are independent sets in $H-Y$ for all vertices $y \in Y$, because $y \in Y$ is the only vertex of the set $Y$ that is contained in $I_y$. Furthermore, we know that the sets $I_y'$ are maximum independent sets in $H-Y$ because
    \begin{align*}
     |I'_y|+1 = |I_y| = \alpha(H) = \alpha(H-Y)+1.
    \end{align*}
    
    The fact that $I_y'$ is a maximum independent set for all vertices $y \in Y$ implies that $V_1 \subseteq I_y' = I_y \setminus \{y\} \subseteq I_y$ (by the choice of the solution $x$ to $\ISLP(H-Y)$). Thus, for all vertices $y \in Y$ it holds that $V_1 \subseteq I_y$ and therefore that $N_H(I_y) \cap V_1 = \emptyset$ which implies that $N_H(\{y\}) \cap V_1 = \emptyset$ (because $V_1 \cup \{y\} \subseteq I_y$). Since this holds for all vertices $y \in Y$ it follows that $N_H(Y) \cap V_1 = \emptyset$, hence $N_H(Y) \subseteq V_0 \cup V_\frac{1}{2}$.
    
    To bound the size of $Y$ we try to find an upper bound for the size of a maximum independent set in $H-Y$ and a lower bound for the size of a maximum independent set in $H$.
    An obvious upper bound for the size of a maximum independent set in $H-Y$ is the optimum value of $\ISLP(H-Y)$ which is equal to $|V_1| + \frac{1}{2} |V_\frac{1}{2}|$. This leads to an upper bound for $\alpha(H-Y)$:
    \begin{align}
        \alpha(H-Y) &\leq w(x) = |V_1| + \frac{1}{2} |V_\frac{1}{2}| = |V_1| + \frac{1}{2}|H-V_0-V_1-Y| \nonumber \\ &= |V_1| + \frac{|H-V_0-V_1|}{2} - \frac{|Y|}{2} \text{,} \label{align::IS_H-Y}
    \end{align}
    because $V_0 \cup V_1 \subseteq H-Y$.
    
    Next, we try to find a lower bound for the size of a maximum independent set in $H$. 
    We will construct an independent set $I_H$ in $H$ and the size of this independent set is a lower bound for the size of a maximum independent set in $H$. First of all, we add all vertices from the independent set $V_1$ to $I_H$; this will prevent every vertex from $N_H(V_1)$ to be part of the independent set $I_H$. Now, we can extend the independent $V_1$ by an independent set in $H-N_H[V_1]$. First, observe that $N_H[V_1] \cap Y = \emptyset$, because $V_1 \subseteq (H-Y)$ and $N_H(Y) \cap V_1 = \emptyset$. From this follows that $H-N_H[V_1] = H - V_0 - V_1$, because $N(V_1)=V_0$. Instead of adding an independent set of $H-V_0-V_1$ to $I_H$, we add a maximum independent set $I_F$ of the forest $H- V_0 - V_1 - Z$ to $I_H$; such an independent set $I_F$ has size at least $\frac{1}{2} |H-V_0-V_1-Z|$. This leads to the following lower bound for $\alpha(H)$:
    \begin{align}
        \alpha(H) &\geq |I_H| = |V_1| + |I_F| \geq |V_1| + \frac{|H-V_0-V_1-Z|}{2} \nonumber \\ 
        &= |V_1| + \frac{|H-V_0-V_1|}{2} - \frac{|Z \setminus (V_0 \cup V_1)|}{2}
        \geq |V_1| + \frac{|H-V_0-V_1|}{2} - \frac{|Z|}{2} \label{align::IS_H}
    \end{align}
    
    Using the equation $\alpha(H)=\alpha(H-Y)+1$ together with the upper bound for $\alpha(H-Y)$ and the lower bound for $\alpha(H)$ leads to the requested upper bound for the size of $Y$:
    \begin{align*}
        |V_1| + \frac{|H-V_0-V_1|}{2} - \frac{|Z|}{2} \overset{(\ref{align::IS_H})}{\leq} \alpha(H) &= \alpha(H-Y) + 1 \overset{(\ref{align::IS_H-Y})}{\leq} |V_1| + \frac{|H-V_0-V_1|}{2} - \frac{|Y|}{2} + 1 \\
        \implies |Y| &\leq |Z|+2 \text{.}\qedhere
    \end{align*}
\end{proof}

We showed that every minimal blocking set in a $d$-quasi-forest has size at most $d+2$. To proof Theorem \ref{theorem::lower_bound_FVS} it remains to show that the bound is tight:

\begin{proof}[Proof of Theorem \ref{theorem::lower_bound_FVS}]
    We show the remaining part of Theorem \ref{theorem::lower_bound_FVS}, namely that the bound is tight.
    
    Consider the connected graph $H=K_{d+2}$. It holds that $H$ is a $d$-quasi-forest, because any $d$ vertices from $H$ are a feedback vertex set. It holds that the size of a maximum independent set in a clique is 1, hence $\alpha(H-Y')=1$ for all subsets $Y' \subsetneq V(H)$. Therefore, $Y=V(H)$ is the only, and hence a minimal, blocking set in $H$.
\end{proof}

Recall that Reduction Rule \ref{rule::delete_cc} considers the conflicts that a connected component $H$ of the $d$-quasi-forest $G-X$ has with subsets of $X$. So far, we only talked about the size of minimal blocking sets instead of the size of minimal subset of $X$ that leads to a conflict. 
Since every independent set $X_I \subseteq X$ that has a conflict with $H$, has some neighbors in this component, we know that these vertices are a blocking set of $H$. Using Lemma \ref{lemma::conflict} we can argue that only a subset of at most $d+2$ vertices (of the neighborhood of $X_I$ in $H$) is important. Like Jansen and Bodlaender~\cite{JansenB13} resp.\ Fomin and Str{\o}mme~\cite{FominS16} we show how a smaller subset of $V(H) \cap N(X_I)$ leads to a smaller subset of $X_I$ that has a conflict with the connected component $H$.

\begin{lemma} \label{lemma::conflict_size}
    Let $(G,X,k)$ be an instance of \IS parameterized by a modulator to a $d$-quasi-forest.
    Let $H$ be a connected component of $G-X$ and let $X_I \subseteq X$ be an independent set in $G$. If $\Conf{H}{X_I} > 0$, then there exists a set $X' \subseteq X_I$ of size at most $d+2$ such that $\Conf{H}{X'} > 0$.
\end{lemma}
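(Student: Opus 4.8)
The plan is to derive this directly from the $d+2$ bound on minimal blocking sets established in Lemma~\ref{lemma::conflict}. Unfolding the definition of conflicts, the hypothesis $\Conf{H}{X_I} > 0$ says exactly that $\alpha(H) > \alpha(H - N(X_I))$, i.e.\ that $Y_0 := N(X_I) \cap V(H)$ is a blocking set of the connected component $H$. Since $H$ is connected and, being a component of a $d$-quasi-forest, has a feedback vertex set of size at most $d$, it is itself a $d$-quasi-forest, so Lemma~\ref{lemma::conflict} applies to it. First I would pick an inclusion-minimal blocking subset $Y \subseteq Y_0$ (which exists because $Y_0$ is a finite blocking set, and any proper subset of $Y$ is then a subset of $Y_0$ that is not a blocking set, so $Y$ is a minimal blocking set of $H$ in the global sense); Lemma~\ref{lemma::conflict} gives $|Y| \leq d+2$.

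Next I would transport this small blocking set back into $X$. Every $y \in Y \subseteq N(X_I)$ has a neighbour in $X_I$; fix one such neighbour $x_y \in X_I$ for each $y \in Y$ and put $X' := \{ x_y \mid y \in Y \} \subseteq X_I$. Then $|X'| \leq |Y| \leq d+2$, which is the claimed size bound (and $X'$ is an independent set in $G$ as a subset of $X_I$, though this is not needed here).

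It remains to check $\Conf{H}{X'} > 0$. By construction each $y \in Y$ is adjacent to $x_y \in X'$, so $Y \subseteq N(X') \cap V(H)$ and hence $V(H) \setminus N(X') \subseteq V(H) \setminus Y$; that is, $H - N(X')$ is an induced subgraph of $H - Y$. Monotonicity of the independence number under taking induced subgraphs gives $\alpha(H - N(X')) \leq \alpha(H - Y)$, while $\alpha(H - Y) < \alpha(H)$ because $Y$ is a blocking set of $H$. Combining, $\alpha(H - N(X')) < \alpha(H)$, i.e.\ $\Conf{H}{X'} = \alpha(H) - \alpha(H - N(X')) > 0$, as required.

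I do not expect a genuine obstacle: all the real content sits in Lemma~\ref{lemma::conflict}, and this statement is essentially bookkeeping of the same flavour as in Jansen and Bodlaender~\cite{JansenB13} and Fomin and Str{\o}mme~\cite{FominS16}. The one place to be careful is choosing the representatives $x_y$ so that \emph{all} of $Y$ reappears in $N(X')$; knowing merely that $N(X') \cap V(H)$ meets $Y$, or is some other blocking set, would not let the monotonicity step go through, so the construction must be exactly ``one private $X_I$-neighbour per vertex of the minimal blocking set''.
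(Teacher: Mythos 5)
Your proposal is correct and follows essentially the same route as the paper: take $Y=N(X_I)\cap V(H)$, extract an inclusion-minimal blocking subset, bound it by $d+2$ via Lemma~\ref{lemma::conflict}, and pick one $X_I$-neighbour per vertex of that subset so that monotonicity of $\alpha$ yields $\Conf{H}{X'}>0$. The extra care you take (noting that a minimal blocking subset of $Y$ is a minimal blocking set of $H$, and that all of the small blocking set must reappear in $N(X')$) is exactly the implicit content of the paper's argument.
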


\begin{proof}
    Let $Y=N(X_I) \cap V(H)$ be the neighborhood of $X_I$ in the connected component $H$; it holds that $\alpha(H) > \alpha(H-Y)$, because $0 < \Conf{H}{X_I}=\alpha(H)-\alpha(H-N(X_I))=\alpha(H)-\alpha(H-Y)$. Let $Y' \subseteq Y$ be a minimal blocking set. It follows from Lemma \ref{lemma::conflict} that $|Y'| \leq d+2$.
    
    We pick for every vertex $y \in Y'$ an arbitrary neighbor $x_y \in X_I$ in $X_I$. Let $X' = \{x_y \mid y \in Y'\} \subseteq X_I$. Clearly, it holds that $|X'| \leq d+2$ and that $Y' \subseteq N(X')$, which implies that $\Conf{H}{X'}= \alpha(H)-\alpha(H-N(X')) \geq \alpha(H)-\alpha(H-Y') > 0$; thus $X'$ has the desired properties.
\end{proof}

We showed that if a connected component $H$ of $G-X$ has a conflict with a subset $X' \subseteq X$ of the modulator, then there always exists a set $X'' \subseteq X'$ of size at most $d+2$ that has a conflict with the connected component $H$. 
Knowing this, we can show that Reduction Rule \ref{rule::delete_cc} is safe using Lemma \ref{lemma::conflict_size} as well as some observations that where already used in earlier work~\cite{FominS16, JansenB13}.

\begin{lemma} \label{lemma::rule1}
    Reduction Rule \ref{rule::delete_cc} is safe; let $(G,X,k)$ be the instance before applying Reduction Rule \ref{rule::delete_cc} and let $(G',X,k')$ be the reduced instance. Then there exists a solution for $(G,X,k)$ if and only if there exists a solution for $(G',X,k')$.
\end{lemma}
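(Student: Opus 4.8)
The plan is to prove the two implications separately: the forward implication ``$(G,X,k)$ solvable $\Rightarrow$ $(G',X,k')$ solvable'' is immediate, while the reverse implication carries all the content and is where Lemma~\ref{lemma::conflict_size} and the precondition of Reduction Rule~\ref{rule::delete_cc} come in. Throughout I would use that $H$ is a connected component of $G-X$, so that $V(G)$ is partitioned into $X$, $V(H)$, and $V(G-H-X)$, and the only edges leaving $V(H)$ go to $X$; this is what makes it safe to treat $H$ and $G-H-X$ independently once vertices of $X$ have been fixed.

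For the forward direction I would take an independent set $I$ of size $k$ in $G$ and set $I':=I\setminus V(H)$, an independent set in $G'=G-H$. Since $I\cap V(H)$ is an independent set in $H$ it has size at most $\alpha(H)$, so $|I'|\ge k-\alpha(H)=k'$; no property of the reduction rule is used here.

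For the reverse direction I would fix an independent set $I'$ of size $k'$ in $G'$ and put $X_I:=I'\cap X$, which is an independent set in $G$, then split into two cases according to whether $X_I$ conflicts with $H$. If $\Conf{H}{X_I}=0$, I would take a maximum independent set $J$ of $H-N(X_I)$, which has size $\alpha(H)$; since $J\subseteq V(H)$ avoids $N(X_I)$ and $H$ is a component of $G-X$, the set $J$ has no neighbour in $X_I$, no neighbour in $X\setminus X_I$ that lies in $I'$ (because $I'\cap X=X_I$), and no neighbour in $I'\setminus X\subseteq V(G-H-X)$, so $I'\cup J$ is an independent set in $G$ of size $k'+\alpha(H)=k$. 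If instead $\Conf{H}{X_I}>0$, I would invoke Lemma~\ref{lemma::conflict_size} to get a set $X'\subseteq X_I$ with $|X'|\le d+2$ and $\Conf{H}{X'}>0$; since $X'$ is an independent set in $G$ of size at most $d+2$ and $H$ was deleted by Reduction Rule~\ref{rule::delete_cc}, its precondition forces $\Conf{G-H-X}{X'}\ge |X|$. Because $X'\subseteq I'$ and $I'$ is independent, $I'\setminus X$ is an independent set of $G-H-X$ that avoids $N(X')$, so $|I'\setminus X|\le \alpha\bigl((G-H-X)-N(X')\bigr)=\alpha(G-H-X)-\Conf{G-H-X}{X'}\le \alpha(G-H-X)-|X|$, and together with $|X_I|\le |X|$ this yields $k'=|I'|\le \alpha(G-H-X)$. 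Finally I would combine a maximum independent set of $G-H-X$ with a maximum independent set of $H$ --- non-adjacent since $H$ is a component of $G-X$ --- to obtain an independent set of $G-X$, hence of $G$, of size $\alpha(G-H-X)+\alpha(H)\ge k'+\alpha(H)=k$.

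The only non-routine step is the second case of the reverse direction: the key point is that whenever the part $X_I$ of the candidate solution in $X$ genuinely interferes with $H$, the short certificate $X'$ supplied by Lemma~\ref{lemma::conflict_size} triggers, via the reduction rule, a loss of at least $|X|$ conflicts inside $G-H-X$, and this loss already pushes $k'$ below $\alpha(G-H-X)$; once that inequality is in hand one simply discards $I'$ and rebuilds a solution greedily, taking a maximum independent set in each of $H$ and $G-H-X$ separately, which is legitimate precisely because $H$ interacts with the rest of $G$ only through $X$. Everything else --- that the sets produced are genuinely independent and that the cardinalities add up --- is routine bookkeeping with the partition $V(G)=X\cup V(H)\cup V(G-H-X)$.
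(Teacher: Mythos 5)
Your proof is correct and follows essentially the same route as the paper's: both directions use the same decomposition $V(G)=X\cup V(H)\cup V(G-H-X)$, the reverse direction extracts a small certificate $X'\subseteq I'\cap X$ via Lemma~\ref{lemma::conflict_size}, invokes the precondition of Reduction Rule~\ref{rule::delete_cc} to get $\Conf{G-H-X}{X'}\ge|X|$, and uses the counting $|I'|\le|I'\setminus X|+|X|\le\alpha(G-H-X)$ to rebuild a solution from maximum independent sets of $G-H-X$ and $H$. The only difference is organizational: the paper first normalizes $I'$ (replacing it by a maximum independent set of $G'-X$ whenever some subset of $I'\cap X$ induces at least $|X|$ conflicts) and then derives a contradiction from $\Conf{H}{X\cap I'}>0$, whereas you case-split on $\Conf{H}{X_I}$ directly; the underlying argument is identical.
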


\begin{proof}
    Let $H$ be the connected component of $G-X$ that we delete by applying Reduction Rule \ref{rule::delete_cc}. For the forward direction of the proof, we assume that $(G,X,k)$ has a solution, thus there exists an independent set $I$ of size at least $k$ in $G$. Consider the set $I'=I \setminus V(H)$. Clearly, $I'$ is an independent set of $G'$ of size at least $|I|-|I \cap V(H)| \geq k-\alpha(H)=k'$, because $I \cap V(H)$ is an independent set in $H$. Therefore, $(G',X,k')$ has a solution, namely~$I'$.
    
    For the backward direction of the proof, we assume that $(G',X,k')$ has a solution, thus there exists an independent set $I'$ of size at least $k'$ in $G'$. 
    First, we will show that there always exists an independent set which can only contain an entire set $X' \subseteq X$ if this set induces strictly less than $|X|$ conflicts in $G'-X$. This was already shown by Jansen and Bodlaender~\cite{JansenB13}.
    
    Let $I'$ be an arbitrary independent set in $G'$ of size at least $k'$. Assume that there exists a set $X_I \subseteq X \cap I'$ such that $\Conf{G'-X}{X_I} \geq |X|$. We will construct an independent set $\widetilde{I}$ of the same size that contains no vertex from $X$ and therefore fulfills the desired property.
    Since $\Conf{G'-X}{X_I} \geq |X|$, we know that $\alpha(G'-X) - \alpha(G'-X-N(X_I)) \geq |X|$.
    The set $I' \setminus X$ is an independent set in $G'-X-N(X_I)$, because $I'$ is an independent set in $G'-X$ and $X_I \subseteq I'$ which implies that $N(X_I) \cap I' = \emptyset$. Thus, we know that $\alpha(G'-X-N(X_I)) \geq |I' \setminus X|$. Combining all this leads to 
    \begin{align*}
        |I'| \leq |I' \setminus X| + |X| \leq \alpha(G'-X-N(X_I)) + |X| \leq \alpha(G'-X) \text{.}
    \end{align*}
    Thus, every maximum independent set $\widetilde{I}$ of $G-X$ has at least the cardinality of $I'$ and fulfills the desired properties. 
    
    Now, we can assume that $I'$ is an independent set that can only contain an entire set $X' \subseteq X$ when this set induces strictly less than $|X|$ conflicts in $G'-X$.
    We will show that we can extend $I'$ to an independent set of size at least $k=k'+\alpha(H)$ by adding a maximum independent set of $H$ to $I'$. More precisely, we will show that $\alpha(H) = \alpha(H-N(X \cap I'))$; note that it suffices to show that $\alpha(H)\leq \alpha(H-N(X \cap I'))$. Observe that, if $\alpha(H) \leq \alpha(H-N(X \cap I'))$, then there exists a maximum independent set $I_H$ in $H$ that uses no vertex from the set $N(X \cap I')$, therefore $I= I' \cup I_H$ is an independent set in $G$ of size at least $k' + \alpha(H) = k$.
    
    Now, we assume for contradiction that $\alpha(H)>\alpha(H-N(X \cap I'))$ which implies that $\Conf{H}{X \cap I'}>0$. From Lemma \ref{lemma::conflict_size} it follows that there exists a set $\widetilde{X} \subseteq X \cap I'$ of size at most $d+2$ such that $\Conf{H}{\widetilde{X}} > 0$. Since we assumed that $I'$ can only contain subsets of $X$ that induce less than $|X|$ conflicts in $G'-X$ (and $\widetilde{X} \subseteq I' \cap X$) it holds that $\Conf{G'-X}{\widetilde{X}} < |X|$. But this contradicts the requirements of Reduction Rule \ref{rule::delete_cc}: $\widetilde{X} \subseteq X$ is an independent set of size at most $d+2$ with $\Conf{H}{\widetilde{X}}  > 0$ and $\Conf{G'-X}{\widetilde{X}} = \Conf{G-H-X}{\widetilde{X}} < |X|$. Thus, the assumption is wrong and we have $\alpha(H) \leq \alpha(H-N(X \cap I'))$.
\end{proof}

Recall that if we have an instance $(G,X,k)$ of \IS parameterized by a modulator to a $d$-quasi-forest where Reduction Rule \ref{rule::delete_cc} is not applicable then $G-X$ has at most $|X|^{d+3}$ connected components. To apply the kernelization for \IS parameterized by a modulator to a forest from Jansen and Bodlaender~\cite{JansenB13}, we have to add vertices from each connected component of the $d$-quasi-forest $G-X$ to the modulator $X$, getting a set $X'\supseteq X$, such that the connected components of $G-X'$ are trees. 

We know that every connected component of the $d$-quasi-forest $G-X$ has a feedback vertex set of size at most $d$, which we can find in polynomial time, since $d$ is a constant. Let $Z \subseteq V(G-X)$ be the union of these feedback vertex sets; it holds that $|Z|\leq d \cdot |X|^{d+3}$. Now, the instance $(G',X',k')$ with $G'=G$, $X'=X \cup Z$ and $k'=k$ is an instance of \IS parameterized by a modulator to feedback vertex set. Obviously, it holds that $(G,X,k)$ has a solution if and only if $(G',X',k')$ has a solution. Applying the following result of Jansen and Bodlaender \cite{JansenB13} will finish our kernelization.

\begin{proposition}[{\cite[Theorem 2]{JansenB13}}] \label{proposition::kernel}
    \IS parameterized by a modulator to a \FVS has a kernel with a cubic number of vertices: there is a polynomial-time algorithm that transforms an instance $(G,X,k)$ into an equivalent instance $(G',X',k')$ such that $|X'| \leq 2|X|$ and $|V(G')| \leq 2|X|+28|X|^2+56|X|^3$.
\end{proposition}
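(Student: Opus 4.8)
Proposition~\ref{proposition::kernel} is taken verbatim from Jansen and Bodlaender~\cite{JansenB13}; we only indicate how one arrives at such a kernelization, since it is essentially the $d = 0$ instance of the template used throughout this paper. The plan is to start from $(G,X,k)$ with $F := G - X$ a forest and to apply polynomial-time reduction rules until $|V(F)| = \Oh(|X|^3)$, keeping $|X|$ unchanged up to a factor two; the resulting instance is the claimed kernel.

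First I would clear out the easy part of the forest by standard \IS reductions: a vertex of degree at most one in $G$ always lies in some maximum independent set, so it and its neighbour may be deleted while decrementing $k$; and a vertex of degree two in $F$ with no neighbour in $X$ has degree two in $G$ and can be \emph{folded}, which here merely contracts its two incident forest edges and hence keeps $G - X$ a forest. After exhaustive application, every leaf of $F$ and every degree-two vertex of $F$ has a neighbour in $X$, and since a tree has no more vertices of degree at least three than leaves, this already gives $|V(F)| = \Oh(|N_F(X)|)$, where $N_F(X)$ denotes the set of vertices of $F$ with a neighbour in $X$. It remains to bound $|N_F(X)|$ by $\Oh(|X|^3)$.

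That last bound is the heart of the matter and the step I expect to be the main obstacle. It rests on the same blocking-set analysis as Section~\ref{section::d_forest}: Lemma~\ref{lemma::conflict} with $d = 0$ shows that a minimal blocking set inside a tree component of $F$ has size at most two, so, exactly as in Lemma~\ref{lemma::conflict_size}, whenever an independent set $X_I \subseteq X$ has $\Conf{H}{X_I} > 0$ for a tree component $H$ there is already a pair $X' \subseteq X_I$ with $\Conf{H}{X'} > 0$. Mirroring Reduction Rule~\ref{rule::delete_cc} and Lemma~\ref{lemma::bound_cc}, one then deletes the part of $F$ that is redundant for every such pair --- whole conflict-free components, and, with a finer marking inside the surviving ones, all but $\Oh(|X|)$ representatives per pair --- which leaves $\Oh\big(\binom{|X|}{\le 2}\cdot|X|\big) = \Oh(|X|^3)$ vertices in $N_F(X)$. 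The delicate point, for which~\cite{JansenB13} must be consulted, is that this trimming has to be carried out \emph{inside} a tree component and not only by deleting whole components, because a single tree may legitimately carry a conflict for many different pairs. Combining the resulting $\Oh(|X|^3)$ bound with $|V(F)| = \Oh(|N_F(X)|)$ and re-expanding the folded and deleted parts yields an equivalent instance on $\Oh(|X|^3)$ vertices with a modulator of size at most $2|X|$; the exact constants $28$ and $56$ are bookkeeping artefacts of the reduction rules.
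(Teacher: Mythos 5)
The paper does not prove this proposition at all: it is Theorem 2 of Jansen and Bodlaender~\cite{JansenB13}, imported verbatim and used as a black box in the last step of Theorem~\ref{theorem::quasi-forest}, so a citation is the expected ``proof'' here and there is nothing in the paper to compare your sketch against. Your outer layer is sound and non-circular: the degree-at-most-one rule and the folding rule do keep $G-X$ a forest (the two neighbours of a folded vertex both lie in $F$ and cannot be adjacent there), the leaf-counting argument correctly yields $|V(F)| = \Oh(|N_F(X)|)$ after exhaustive application, and invoking Lemma~\ref{lemma::conflict} and Lemma~\ref{lemma::conflict_size} with $d=0$ is legitimate, since those are proved in the paper without any reference to Proposition~\ref{proposition::kernel}.

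The genuine gap is the step you yourself flag: bounding $|N_F(X)|$ by $\Oh(|X|^3)$. Everything you concretely describe --- deleting conflict-free components and keeping $\Oh(|X|)$ representatives per independent subset of $X$ of size at most two --- is component-granular; it bounds the \emph{number} of trees (by $\Oh(|X|^3)$, mirroring Lemma~\ref{lemma::bound_cc} with $d=0$), but it says nothing about how many $X$-neighbours a single surviving tree may contain. A long path in $F$ every vertex of which has a neighbour in $X$ survives all of the rules you state, so $|N_F(X)|$ is not yet bounded by any function of $|X|$. The ``finer marking inside the surviving ones'' that you gesture at is precisely the content of Jansen and Bodlaender's Theorem 2 (identifying, within one tree, which $X$-neighbours are redundant and removing or bypassing them while preserving the conflict structure), and it does not follow from the blocking-set bound alone. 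As submitted, the sketch reproduces the easy shell and then defers the theorem's actual substance back to the citation; that is acceptable as an annotation of an imported result, but it is not a proof of the stated vertex bound.
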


\begin{theorem} \label{theorem::quasi-forest}
    \IS parameterized by a modulator to a $d$-quasi-forest admits a kernel with $\Oh(d^3 |X|^{3d+9})$ vertices.
\end{theorem}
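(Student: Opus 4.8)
The plan is to assemble the kernelization from the three ingredients developed above: exhaustively apply Reduction Rule~\ref{rule::delete_cc}, then turn the modulator $X$ into a feedback vertex set of $G$ by absorbing one small feedback vertex set per surviving component, and finally invoke the cubic-vertex kernel of Jansen and Bodlaender (Proposition~\ref{proposition::kernel}).

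First I would argue that Reduction Rule~\ref{rule::delete_cc} can be tested and applied in polynomial time. Since $d$ is a constant, a maximum independent set of any induced subgraph of the $d$-quasi-forest $G-X$ is computable in polynomial time component-wise: in each component brute-force over all vertex subsets of size at most $d$ to find a feedback vertex set, solve the resulting forest by the standard dynamic program for each way of intersecting that feedback vertex set with the desired independent set, and take the best outcome. Hence for every component $H$ of $G-X$ and each of the $\Oh(|X|^{d+2})$ independent sets $X_I \subseteq X$ of size at most $d+2$, the values $\Conf{H}{X_I}$ and $\Conf{G-H-X}{X_I}$ can be computed in polynomial time, so the premise of the rule can be checked for every component in polynomial time. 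Each application deletes one component and decreases $k$, so the rule fires at most $|V(G)|$ times; afterwards it is no longer applicable. By Lemma~\ref{lemma::rule1} the resulting instance $(G,X,k)$ is equivalent to the original one, and by Lemma~\ref{lemma::bound_cc} the $d$-quasi-forest $G-X$ then has at most $|X|^{d+3}$ connected components.

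Next I would, for each of these at most $|X|^{d+3}$ components, compute by brute force (again using that $d$ is constant) a feedback vertex set of size at most $d$, let $Z$ be the union of these sets, and set $X' := X \cup Z$ and $k' := k$. Then $G - X'$ is a forest, $|X'| \le |X| + d \cdot |X|^{d+3} = \Oh(d\,|X|^{d+3})$, and $(G,X',k')$ is an equivalent instance of \IS parameterized by a modulator to a \FVS. Applying Proposition~\ref{proposition::kernel} produces in polynomial time an equivalent instance on at most $2|X'| + 28|X'|^2 + 56|X'|^3 = \Oh(|X'|^3) = \Oh\bigl(d^3\,|X|^{3d+9}\bigr)$ vertices, which is the claimed bound; chaining the three polynomial-time steps gives the overall polynomial-time kernelization.

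I do not expect a genuine obstacle here: the hard combinatorial content — that minimal blocking sets in a $d$-quasi-forest have size at most $d+2$, and that a conflict with $X_I$ therefore localizes to a subset of $X_I$ of size at most $d+2$ — is precisely Lemmas~\ref{lemma::conflict} and~\ref{lemma::conflict_size}, and safeness of the rule is Lemma~\ref{lemma::rule1}. What remains is bookkeeping: checking that every step runs in polynomial time (the only mildly delicate point being the component-wise computation of independence numbers of near-forests, which is routine for constant $d$) and tracking the polynomial blow-up of $|X|$ through the chain $|X| \to |X| + d\,|X|^{d+3} \to \Oh\bigl((d\,|X|^{d+3})^3\bigr)$.
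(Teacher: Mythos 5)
Your proposal is correct and follows essentially the same route as the paper: exhaustively apply Reduction Rule~\ref{rule::delete_cc} (safeness via Lemma~\ref{lemma::rule1}, component bound via Lemma~\ref{lemma::bound_cc}), absorb a feedback vertex set of size at most $d$ per remaining component into the modulator, and finish with Proposition~\ref{proposition::kernel}, with the same size bookkeeping yielding $\Oh(d^3|X|^{3d+9})$. The only difference is that you spell out the brute-force-plus-forest-DP computation of independence numbers in near-forests, which the paper leaves implicit.
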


\begin{proof}
    Given an input instance $(G,X,k)$ of \IS parameterized by a modulator to a $d$-quasi-forest, we first apply Reduction Rule \ref{rule::delete_cc} exhaustively to obtain an equivalent instance $(\widetilde{G},X,\widetilde{k})$ of \IS parameterized by a modulator to a $d$-quasi-forest where the $d$-quasi-forest $\widetilde{G}-X$ has at most $|X|^{d+3}$ connected components (Lemma \ref{lemma::bound_cc}). The fact that the instances are equivalent follows from Lemma \ref{lemma::rule1}. Furthermore, we can compute the instance $(\widetilde{G},X,\widetilde{k})$ in polynomial time:
    Every application of Reduction Rule \ref{rule::delete_cc} deletes a connected component of $G-X$ and decreases the value of $k$ appropriately, hence we apply this rule at most $|V|$ times. 
    To apply Reduction Rule \ref{rule::delete_cc}, we have to find a connected component $H$ of $G-X$ that only has a conflict with an independent set $X_I \subseteq X$, when the set $X_I$ has a large conflict in the $d$-quasi-forest $G-X$. Thus, for every connected component $H$ of $G-X$ (at most $|V|$) and the $d$-quasi-forest $G-X$ we have to compute for at most $|X|^{d+2}$ sets $X_I \subseteq X$ the value $\Conf{H}{X_I}$ resp.\ $\Conf{G-X}{X_I}$. Since $d$ is a constant we can compute $\Conf{H}{X_I}$ and $\Conf{G-X}{X_I}$ for all sets $X_I \subseteq X$ in polynomial time and we can easily check whether a connected component $H$ fulfills the properties of Reduction Rule \ref{rule::delete_cc}. Summarized, instance $(\widetilde{G},X,\widetilde{k})$ is equivalent to instance $(G,X,k)$ and we can compute this instance in polynomial time. 
    
    Now, we add for each of the at most $|X|^{d+3}$ connected components of $\widetilde{G}-X$ a feedback vertex set of size at most $d$ to $X$; let $Z$ be the union of these feedback vertex sets. We add the vertex set $Z$ to the modulator $X$ to obtain an instance $(\widetilde{G},\widetilde{X}=X \cup Z,\widetilde{k})$ of \IS parameterized by a modulator to a forest. 
    
    The instances $(\widetilde{G},X,\widetilde{k})$ and $(\widetilde{G},\widetilde{X},\widetilde{k})$ are obviously equivalent. To prove that we can construct $(\widetilde{G},\widetilde{X},\widetilde{k})$ in polynomial time, we only have to show that we can find the set $Z$ in polynomial time; this holds, because we can find a feedback vertex of constant size in polynomial time. 

    Finally, we apply the kernelization algorithm of Proposition \ref{proposition::kernel} to the instance $(\widetilde{G},\widetilde{X},\widetilde{k})$ of \IS parameterized by a modulator to a forest and obtain an equivalent instance $(G',X',k')$ in polynomial time.
    
    So far, we know that we can compute instance $(G',X',k')$ of \IS parameterized by a modulator to a feedback vertex set, which is equivalent to the instance $(G,X,k)$ of \IS parameterized by a modulator to a $d$-quasi-forest, in polynomial time. It remains to bound the size of $V(G')$, $X'$ and $k'$. 
    
    We never increase the size of $k$, we only decrease the size of $k$ in Reduction Rule \ref{rule::delete_cc} and the application of Proposition \ref{proposition::kernel}, hence $k' \leq k$. Next, we bound the size of $X'$. We increase the cardinality of the set $X$ twice, once by adding the feedback vertex set $Z$ of the $d$-quasi-forest $\widetilde{G}-X$ to the modulator and once (by a factor of two) by applying Proposition \ref{proposition::kernel}. This leads to the following bound for the size of $X'$: $|X'| \leq 2 |\widetilde{X}| = 2 (|Z| + |X|) \leq 2 (d \cdot |X|^{d+3} + |X|)$.
   
   Finally, we have to bound the number of vertices in $G'$. It follows from applying Proposition \ref{proposition::kernel} to the instance $(\widetilde{G},\widetilde{X},\widetilde{k})$ of \IS parameterized by a modulator to a forest that the reduced instance $(G',X',k')$ has at most $2|\widetilde{X}| + 28 |\widetilde{X}|^2+56|\widetilde{X}|^3$ vertices. This leads to the desired bound for $V(G')$:
   \begin{align*}
   |V(G')| &\leq 2|\widetilde{X}| + 28 |\widetilde{X}|^2+56|\widetilde{X}|^3 \\
   &\leq 2(d \cdot |X|^{d+3} + |X|) + 28 (d \cdot |X|^{d+3} + |X|)^2+56(d \cdot |X|^{d+3} + |X|)^3 \\
   &\in \Oh(d^3 |X|^{3d+9}) \text{.}\qedhere
   \end{align*}
\end{proof}

\begin{corollary}
    \VC parameterized by a modulator to a $d$-quasi-forest admits a kernel with $\Oh(d^3 |X|^{3d+9})$ vertices.
\end{corollary}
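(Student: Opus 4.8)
The plan is to obtain the statement as an immediate consequence of Theorem~\ref{theorem::quasi-forest} together with the standard complementation duality between \IS and \VC. The key point is that the property ``$X$ is a modulator to a $d$-quasi-forest of $G$'' refers only to the graph $G$ and the vertex set $X$, and hence is exactly the same requirement for instances of \VC and of \IS; moreover, for a graph $G$ on $n$ vertices, $G$ has a vertex cover of size at most $k$ if and only if $G$ has an independent set of size at least $n-k$.

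Concretely, I would proceed as follows. Given an instance $(G,X,k)$ of \VC parameterized by a modulator to a $d$-quasi-forest, first form the instance $(G,X,|V(G)|-k)$ of \IS parameterized by a modulator to a $d$-quasi-forest; this leaves $G$ and $X$ untouched (so $G-X$ is still a $d$-quasi-forest) and, by the complementation identity, yes-instances correspond to yes-instances. Next, apply the polynomial-time kernelization of Theorem~\ref{theorem::quasi-forest} to this \IS instance, obtaining an equivalent instance $(G',X',k')$ with $|V(G')| \in \Oh(d^3|X|^{3d+9})$. Finally, translate back by outputting the \VC instance $(G',X',|V(G')|-k')$. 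Since the kernelization of Theorem~\ref{theorem::quasi-forest} ends with an application of Proposition~\ref{proposition::kernel}, the graph $G'-X'$ is in fact a forest, hence a $d$-quasi-forest, so $(G',X',|V(G')|-k')$ is a legitimate instance of \VC parameterized by a modulator to a $d$-quasi-forest; it has $\Oh(d^3|X|^{3d+9})$ vertices, and chaining the two equivalences shows it is a yes-instance exactly when $(G,X,k)$ is.

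I expect no real obstacle here. The only things to verify are that both complementation steps run in polynomial time and preserve equivalence, which is immediate, and that the instances handed to and produced by the \IS kernelization are valid instances of the respective parameterized problems; this holds because none of the steps changes the underlying graph or the modulator, so the structural condition on $G-X$ is never affected. Thus the corollary follows in a couple of lines from Theorem~\ref{theorem::quasi-forest}.
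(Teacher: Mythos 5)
Your proposal is correct and matches the paper's intent: the corollary is stated without a separate proof precisely because, as the paper notes at the start of Section~\ref{section::d_forest}, the kernel for \IS ``directly yields'' one for \VC via the complementation $k \mapsto |V(G)|-k$, which is exactly the reduction you spell out. Your additional observation that the output instance remains a valid instance of the parameterized problem (since $G'-X'$ is a forest) is a reasonable bit of due diligence but introduces nothing beyond the paper's implicit argument.
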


\section{Two other graph classes with small blocking sets} \label{section::small_blocking_sets}

In this section we consider \VC parameterized by a modulator to a $d$-quasi-bipartite graph and by a modulator to a $d$-quasi-integral graph. As in the case of \VC parameterized by a modulator to a $d$-quasi-forest, we prove that the size of a minimal blocking set is bounded linearly in $d$ to reduce the number of connected components in the $d$-quasi-bipartite graph resp.\ the $d$-quasi-integral graph. Having only polynomial in the modulator many connected components we show that we can apply the randomized polynomial kernelizations for \VC parameterized by a modulator to a bipartite graph, resp.\ \VC above $\VCLP$.

The proof that there exists a kernelization for \VC parameterized by a modulator to a $d$-quasi-bipartite graph works just the same as the kernelization for \VC parameterized by a modulator to a $d$-quasi-forest, except for the last step. Here we apply the kernelization of \VC parameterized by a modulator to a bipartite~graph. 

\begin{corollary} \label{lemma::conflict_OCT}
    In a $d$-quasi-bipartite graph the size of a minimal blocking set has a tight upper bound of $d+2$.
\end{corollary}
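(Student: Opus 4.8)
The plan is to mirror the proof of Lemma~\ref{lemma::conflict} almost verbatim, replacing the feedback vertex set $Z$ of a component by an odd cycle transversal $Z$ of size at most $d$. The key observation is that the only place where "forest" was used in the proof of Lemma~\ref{lemma::conflict} is the bound $|I_F| \geq \frac{1}{2}|H - V_0 - V_1 - Z|$ on the independence number of the forest obtained after deleting $V_0$, $V_1$, and $Z$; this same bound holds for any \emph{bipartite} graph, since a bipartite graph on $m$ vertices always has an independent set of size at least $m/2$ (take the larger of the two sides). So I would first fix an optimum half-integral solution $x$ to $\ISLP(H-Y)$ satisfying Lemma~\ref{lemma::LP_solution}, derive exactly as before that $\alpha(H) = \alpha(H-Y)+1$ and that $N_H(Y) \subseteq V_0 \cup V_{1/2}$, and obtain the upper bound $\alpha(H-Y) \leq |V_1| + \frac{1}{2}|H - V_0 - V_1| - \frac{1}{2}|Y|$ exactly as in inequality~(\ref{align::IS_H-Y}).

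For the lower bound on $\alpha(H)$, I would again start the independent set $I_H$ with $V_1$, note that $N_H[V_1] = V_0 \cup V_1$ is disjoint from $Y$, and then extend inside $H - V_0 - V_1$. The difference from the quasi-forest case is that I now delete the odd cycle transversal $Z$ of $H$ (of size at most $d$) and observe that $H - V_0 - V_1 - Z$ is bipartite, hence has an independent set of size at least $\frac{1}{2}|H - V_0 - V_1 - Z| \geq \frac{1}{2}|H - V_0 - V_1| - \frac{|Z|}{2}$. Adding this to $V_1$ yields $\alpha(H) \geq |V_1| + \frac{1}{2}|H - V_0 - V_1| - \frac{|Z|}{2}$, the analogue of~(\ref{align::IS_H}). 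Combining the two bounds via $\alpha(H) = \alpha(H-Y)+1$ gives $|Y| \leq |Z| + 2 \leq d+2$, establishing the upper bound.

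Tightness is immediate and identical to the quasi-forest case: the clique $K_{d+2}$ is $d$-quasi-bipartite (deleting any $d$ of its vertices leaves $K_2$, which is bipartite, so any $d$ vertices form an odd cycle transversal), and $V(K_{d+2})$ is its unique, hence minimal, blocking set. Finally, to conclude the corollary as stated I would package these two facts (the lemma-analogue and the $K_{d+2}$ example) exactly as Theorem~\ref{theorem::lower_bound_FVS} packages Lemma~\ref{lemma::conflict} and its tightness proof.

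I do not expect a genuine obstacle here: the whole point is that the proof of Lemma~\ref{lemma::conflict} never really used the acyclic structure of the forest, only the $1/2$-factor integrality-gap-free lower bound on the independence number of the leftover graph, which is shared by forests and bipartite graphs alike. The only mild care needed is to double-check that deleting a size-$d$ odd cycle transversal from $H$ indeed leaves a bipartite graph on the relevant vertex set $H - V_0 - V_1 - Z$ (it does, since bipartiteness is hereditary), and that computing such a transversal per component is affordable — but for constant $d$ this is polynomial, exactly as in the quasi-forest algorithm, and in any case the kernelization step is deferred to the randomized kernel for \VC parameterized by an odd cycle transversal~\cite{KratschW12}.
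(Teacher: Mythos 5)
Your proposal is correct and follows essentially the same route as the paper: reuse the half-integral solution from Lemma~\ref{lemma::LP_solution}, keep the upper bound~(\ref{align::IS_H-Y}) unchanged, and replace the forest $H-V_0-V_1-Z$ by the bipartite graph obtained after deleting the odd cycle transversal $Z$, whose independence number is still at least half its order, yielding $|Y|\leq |Z|+2$; tightness via $K_{d+2}$ is also identical. No gaps.
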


\begin{proof}
    Let $H=(V,E)$ be a $d$-quasi-bipartite graph and let $Z$ be an odd cycle transversal in $H$ of size at most $d$.
    Like in the proof of Lemma \ref{lemma::conflict} we consider an optimum half integral solution $x$ to $\ISLP(H-Y)$ which fulfills the properties of Lemma \ref{lemma::LP_solution}. Let $V_i = \{v \in V(H-Y) \mid x_v=i\}$ for $i\in \{0, \frac{1}{2},1\}$.
    
    Note that the upper bound $\alpha(H-Y) \overset{(\ref{align::IS_H-Y})}{\leq} |V_1|+\frac{1}{2} |H-V_0-V_1|-\frac{1}{2}|Y|$ holds also in this case, because the value of an optimum half integral solution is always a valid upper bound.
    
    But also the lower bound $\alpha(H) \overset{(\ref{align::IS_H})}{\geq} |V_1| + \frac{1}{2} |H-V_0-V_1| - \frac{1}{2} |Z|$ holds in this case, because the independent set $V_1$ together with an independent set in the bipartite graph $H-V_0-V_1-Z$ is an independent set in $H$. Note that a maximum independent  set in a bipartite graph contains at least half the vertices. 
    Hence, we get $|Y| \leq |Z|+2$.
    
    It remains to show that the upper bound of $d+2$ vertices is tight. Note that the connected graph $H=K_{d+2}$ with $d=|Z|$ has only one blocking set, namely the set $V(H)$, which matches the upper bound (compare with Theorem \ref{theorem::lower_bound_FVS}).
\end{proof}

\begin{corollary}
    \VC parameterized by a modulator to a $d$-quasi-bipartite graph admits a randomized polynomial kernel.
\end{corollary}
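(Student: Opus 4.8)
The plan is to mirror the structure already used for $d$-quasi-forests in Section~\ref{section::d_forest}, replacing only the final black-box kernelization. Concretely, I would first observe that the entire machinery of Section~\ref{section::d_forest} --- the definition of conflicts, Reduction Rule~\ref{rule::delete_cc}, Lemma~\ref{lemma::bound_cc} (bounding the number of components by $|X|^{d+3}$), the safeness argument of Lemma~\ref{lemma::rule1}, and the reduction from small vertex subsets of the modulator to small conflicting subsets (Lemma~\ref{lemma::conflict_size}) --- never uses that $G-X$ is a $d$-quasi-forest beyond two facts: that a maximum independent set of each component can be computed in polynomial time, and that minimal blocking sets in each component have size at most $d+2$. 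The first holds because \IS on bipartite graphs is polynomial and $d$ is a constant (guess the at most $d$ odd-cycle-transversal vertices in the independent set, delete their neighborhoods, solve the remaining bipartite instance by matching). The second is exactly Corollary~\ref{lemma::conflict_OCT}. So Reduction Rule~\ref{rule::delete_cc}, verbatim with ``$d$-quasi-forest'' replaced by ``$d$-quasi-bipartite graph'', is safe and, once exhausted, leaves an equivalent instance $(\widetilde G,X,\widetilde k)$ in which $\widetilde G-X$ has at most $|X|^{d+3}$ connected components, all computable in polynomial time.

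Next I would extend the modulator. For each of the at most $|X|^{d+3}$ components I compute (in polynomial time, since $d$ is constant) an odd cycle transversal of size at most $d$; let $Z$ be their union, so $|Z|\le d\cdot|X|^{d+3}$. Setting $\widetilde X = X\cup Z$, the graph $\widetilde G-\widetilde X$ is bipartite (each component minus its odd cycle transversal is bipartite, and disjoint unions of bipartite graphs are bipartite), so $\widetilde X$ is an odd cycle transversal of $\widetilde G$ of size at most $d\cdot|X|^{d+3}+|X|$, and $(\widetilde G,\widetilde X,\widetilde k)$ is trivially equivalent to $(\widetilde G,X,\widetilde k)$. Finally I invoke the randomized polynomial kernelization for \VC (equivalently \IS) parameterized by an odd cycle transversal due to Kratsch and Wahlstr\"om~\cite{KratschW12}: applied to $(\widetilde G,\widetilde X,\widetilde k)$ it yields, in randomized polynomial time, an equivalent instance of size polynomial in $|\widetilde X|$, hence polynomial in $|X|$ (for fixed $d$). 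Composing with the polynomial-time reductions above gives the desired randomized polynomial kernel.

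The only genuinely new ingredient compared with Theorem~\ref{theorem::quasi-forest} is Corollary~\ref{lemma::conflict_OCT}, and since that is already stated and proved in the excerpt, the remaining work is essentially bookkeeping: checking that every lemma of Section~\ref{section::d_forest} that I want to reuse was phrased (or can be re-derived) without reference to the forest structure. The mildly delicate point --- the place I would be most careful about --- is that \IS on a $d$-quasi-bipartite component is polynomial-time solvable, which is what licenses the Conflicts definition and the polynomial running time of checking Reduction Rule~\ref{rule::delete_cc}; I would spell out the ``guess the OCT-vertices, then solve bipartite \IS by König's theorem'' argument explicitly, as it is the one fact that genuinely differs from the forest case (where one instead uses that \IS on forests is trivially polynomial). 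Everything else transfers mutatis mutandis, and in the writeup I would simply say so rather than reproducing the proofs of Lemmas~\ref{lemma::bound_cc}, \ref{lemma::conflict_size} and~\ref{lemma::rule1} again.
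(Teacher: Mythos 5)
Your proposal is correct and follows essentially the same route as the paper: exhaust Reduction Rule~\ref{rule::delete_cc} (justified by Corollary~\ref{lemma::conflict_OCT} and polynomial-time \IS on $d$-quasi-bipartite components), extend the modulator by per-component odd cycle transversals to get an odd cycle transversal of size at most $d\cdot|X|^{d+3}+|X|$, and then apply the randomized kernelization of Kratsch and Wahlstr\"om~\cite{KratschW12}. The paper's proof is exactly this argument, including the observation that Lemmas~\ref{lemma::bound_cc} and~\ref{lemma::rule1} only rely on the blocking-set bound and generic facts about maximum independent sets.
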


\begin{proof}
    Let $(G,X,k)$ be an instance of \VC parameterized by a modulator to a $d$-quasi-bipartite graph. First, we transform it into an equivalent instance $(G,X,|V(G)|-k)$ of \IS parameterized by a modulator to a $d$-quasi-bipartite graph. 
    Just as for the kernelization for \IS parameterized by a modulator to a $d$-quasi-forest, we can obtain in polynomial time an equivalent instance $(\widetilde{G},\widetilde{X},|V(\widetilde{G})|-\widetilde{k})$ of \IS parameterized by a modulator to a bipartite graph, where the cardinality of $\widetilde{X}$ is at most $d \cdot |X|^{d+3} + |X|$. 
    
    We can apply Reduction Rule \ref{rule::delete_cc} and we can add at most $d \cdot |X|^{d+3}$ vertices to the modulator to obtain an instance of \IS parameterized by a modulator to a $d$-quasi-bipartite graph. The proof works analogously to Theorem \ref{theorem::quasi-forest}, because we can compute a maximum independent set in a $d$-quasi-bipartite graph in polynomial time (because $d$ is a constant) and because Lemma \ref{lemma::bound_cc} and Lemma \ref{lemma::rule1} still hold. The proofs of these lemmas use only the fact that there exists a minimal blocking set of size at most $d+2$ and general facts about maximum independent sets.

    Finally, transforming the instance $(\widetilde{G},\widetilde{X},|V(\widetilde{G})|-\widetilde{k})$ of \IS parameterized by a modulator to a bipartite graph into the equivalent instance $(\widetilde{G},\widetilde{X},\widetilde{k})$ of \VC parameterized by a modulator to a bipartite graph, we apply the randomized polynomial kernelization algorithm for \VC parameterized by a modulator to a bipartite graph \cite{KratschW12}. This leads to an instance $(G',X',k')$ in polynomial time. Furthermore, the size of the instance $(G',X',k')$ is bounded by a polynomial in the size of $\widetilde{X}$ and thus polynomially bounded in $|X|$, because $|\widetilde{X}| \leq d |X|^{d+3} + |X|$. 
\end{proof}

In contrast to $d$-quasi-forests and $d$-quasi-bipartite graphs, where every minimal blocking set is of size at most $d+2$, $d$-quasi-integral graphs have minimal blocking sets of size up to $2d+2$. 
Nevertheless, all proofs, to show that there exists a polynomial kernel, still work, because we only need the existence of a small blocking set. 

\begin{lemma} \label{lemma::conflict_LP}
    Let $H=(V,E)$ be a $d$-quasi-integral graph. Then it holds that a minimal blocking set $Y$ in the $d$-quasi-integral graph $H$ has size at most $2d+2$.
\end{lemma}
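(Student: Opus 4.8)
The plan is to mimic the proof of Lemma \ref{lemma::conflict} almost verbatim, replacing the feedback vertex set $Z$ by a set $W$ witnessing $d$-quasi-integrality. Concretely: let $H=(V,E)$ be $d$-quasi-integral and let $Y$ be a minimal blocking set. As in Lemma \ref{lemma::conflict}, take an optimum half-integral solution $x$ to $\ISLP(H-Y)$ satisfying Lemma \ref{lemma::LP_solution}, and set $V_i=\{v\in V(H-Y)\mid x_v=i\}$ for $i\in\{0,\frac12,1\}$. The same minimality argument gives $\alpha(H)=\alpha(H-Y)+1$ and the key structural fact $N_H(Y)\subseteq V_0\cup V_{\frac12}$. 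Exactly as in inequality~(\ref{align::IS_H-Y}), the upper bound $\alpha(H-Y)\leq w(x)=|V_1|+\frac12|H-V_0-V_1|-\frac12|Y|$ holds, since the LP value is always an upper bound on the independence number.

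The only step that must genuinely be redone is the lower bound on $\alpha(H)$, since $d$-quasi-integrality is a statement about the whole graph $H$ rather than about the residual graph $H-V_0-V_1$. My plan here is: first reduce to the connected case, since $d$-quasi-integrality is defined component-wise and a minimal blocking set is contained in a single component (deleting vertices of other components cannot change $\alpha$); so assume $H$ is connected with $\mathrm{vc}(H)\leq \VCLP(H)+d$, equivalently $\alpha(H)\geq |V(H)|-\VCLP(H)-d = \ISLP(H)-d$. As in Lemma \ref{lemma::conflict}, build an independent set in $H$ by taking $V_1$ together with a large independent set of $H-N_H[V_1]=H-V_0-V_1$ (using $N_H(V_1)=V_0$ and $N_H[V_1]\cap Y=\emptyset$). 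The subtlety is to get a good bound on $\alpha(H-V_0-V_1)$: I would argue that $H-V_0-V_1$ is itself $d$-quasi-integral (a witness $W$ for $H$, restricted to $H-V_0-V_1$, still witnesses quasi-integrality, since removing vertices whose LP-values are $0$ or $1$ — an optimum integral-on-those-coordinates fractional cover — leaves the LP optimum and the integral optimum differing by at most $d$; one should spell out that $V_0\cup V_1$ is handled "for free" by any fractional cover). Hence $\alpha(H-V_0-V_1)\geq \ISLP(H-V_0-V_1)-d \geq \frac12|H-V_0-V_1|-d$, since the all-$\frac12$ assignment is feasible for $\ISLP$. This yields
\begin{align}
\alpha(H) \geq |V_1| + \alpha(H-V_0-V_1) \geq |V_1| + \frac{|H-V_0-V_1|}{2} - d. \label{align::IS_H_LP}
\end{align}

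Combining~(\ref{align::IS_H-Y}) and~(\ref{align::IS_H_LP}) with $\alpha(H)=\alpha(H-Y)+1$ gives
\[
|V_1| + \frac{|H-V_0-V_1|}{2} - d \;\leq\; |V_1| + \frac{|H-V_0-V_1|}{2} - \frac{|Y|}{2} + 1,
\]
hence $|Y|\leq 2d+2$, as claimed. The main obstacle is the middle paragraph: justifying carefully that passing from $H$ to $H-V_0-V_1$ preserves the $d$-quasi-integrality gap, i.e.\ that $\mathrm{vc}(H-V_0-V_1)\leq \VCLP(H-V_0-V_1)+d$. The cleanest route is to note $\VCLP(H)=\VCLP(H-V_0-V_1)+|V_1|$ and $\mathrm{vc}(H)\leq \mathrm{vc}(H-V_0-V_1)+|V_1|$ (the latter is not automatic — one must check that an optimum vertex cover of $H$ can be assumed to contain $V_1$ and avoid $V_0$, which follows from Lemma \ref{lemma::LP_solution} applied on the vertex cover side, or equivalently from the Nemhauser–Trotter-type persistency of the half-integral optimum); then $d \geq \mathrm{vc}(H)-\VCLP(H) \geq \mathrm{vc}(H-V_0-V_1)-\VCLP(H-V_0-V_1)$ plus the reverse inequality $\mathrm{vc}(H)\geq \mathrm{vc}(H-V_0-V_1)+|V_1|$ forces equality, giving exactly what is needed. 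Everything else is a line-by-line copy of Lemma \ref{lemma::conflict}.
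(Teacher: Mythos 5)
Your skeleton matches the paper's proof: same half-integral solution from Lemma~\ref{lemma::LP_solution} for $\ISLP(H-Y)$, same upper bound~(\ref{align::IS_H-Y}), same use of $\alpha(H)\geq \ISLP(H)-d$, and the same final arithmetic giving $|Y|\leq 2d+2$. The only place you diverge is the lower bound on $\alpha(H)$, and there your route both over-complicates the step and, as sketched, contains errors. The paper never passes to an integral independent set of $H-V_0-V_1$: it constructs an explicit \emph{feasible fractional} solution $x'$ to $\ISLP(H)$ (value $1$ on $V_1$, value $0$ on $V_0$, value $\tfrac12$ on everything else, including $Y$), whose feasibility needs only $N_H(V_1)=V_0$ (from $N_{H-Y}(V_1)=V_0$ and $N_H(Y)\cap V_1=\emptyset$). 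This gives $\ISLP(H)\geq |V_1|+\tfrac12|H-V_0-V_1|$ directly, and the integrality-gap hypothesis is applied exactly once, to $H$ itself. No claim about $H-V_0-V_1$ being $d$-quasi-integral is needed.

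Your detour through ``$H-V_0-V_1$ has integrality gap at most $d$'' is where the gap lies. The claim happens to be true, but your justification does not work as written. First, the roles of $V_0$ and $V_1$ are reversed: since every maximum independent set of $H-Y$ contains $V_1$ and avoids $V_0$, a minimum vertex cover should \emph{contain $V_0$ and avoid $V_1$}, and the identity you want is $\VCLP(H)\leq \VCLP(H-V_0-V_1)+|V_0|$ (which is the easy direction, via extending a fractional cover by $1$ on $V_0$ and $0$ on $V_1$), not ``$+|V_1|$''. Second, the genuinely hard direction is $\mathrm{vc}(H)\geq \mathrm{vc}(H-V_0-V_1)+|V_0|$, equivalently $\alpha(H)\leq |V_1|+\alpha(H-V_0-V_1)$; this does \emph{not} follow from Nemhauser--Trotter persistency applied to $H$, because $V_0,V_1$ are defined by the LP of $H-Y$, not of $H$, and a maximum independent set of $H$ containing two or more vertices of $Y$ need not avoid $V_0$. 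To prove it you must invoke the minimality of $Y$: the sets $I_y$ (maximum independent sets of $H$ with $I_y\cap Y=\{y\}$) satisfy $V_1\subseteq I_y$ and $I_y\cap V_0=\emptyset$, so $I_y\setminus V_1$ witnesses $\alpha(H-V_0-V_1)\geq \alpha(H)-|V_1|$. With that repair your argument closes, but the paper's one-line fractional construction makes the whole excursion unnecessary, and I would recommend adopting it.
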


\begin{proof}
    Like in the proof of Lemma \ref{lemma::conflict} we consider an optimum half integral solution $x$ to $\ISLP(H-Y)$ which fulfills the properties of Lemma \ref{lemma::LP_solution}. Let $V_i = \{v \in V(H-Y) \mid x_v=i\}$ for $i\in \{0, \frac{1}{2},1\}$.
    
    Note that the upper bound $\alpha(H-Y) \overset{(\ref{align::IS_H-Y})}{\leq} |V_1|+\frac{1}{2} |H-V_0-V_1|-\frac{1}{2}|Y|$ also holds in this case, because the value of an optimum half integral solution is always a valid upper bound.
    
    In this case the lower bound for $\alpha(H)$ works slightly differently. Instead of constructing an independent set in $H$ we construct a feasible solution to $\ISLP(H)$. We first use the fact that $H$ is a $d$-quasi-integral graph, hence $\mathrm{vc}(H) \leq \VCLP(H) +d$, which is equivalent to $\alpha(H) \geq \ISLP(H) -d$, because $\alpha(H)= |H|-\mathrm{vc}(H)$ and $|H|-\VCLP(H) = \ISLP(H)$. Now, we construct a feasible solution $x'$ to $\ISLP(H)$. First, we assign every vertex $v$ in the independent set $V_1$ the value 1 and every vertex $w$ in $N_H(V_0)$ the value 0. Like in the proof of Lemma \ref{lemma::conflict}, it holds that $N_H[V_1] = H-V_0 - V_1$, because $N_H[V_1] \cap Y = \emptyset$. Finally, we assign the value $\frac{1}{2}$ to every vertex in $H- V_0-V_1$. Obviously, $x'$ is a feasible solution to $\ISLP(H)$. This leads to the following lower bound for $\alpha(H)$:
    \begin{align}
        \alpha(H) \geq \ISLP(H) -d \geq |V_1| + \ISLP(H-V_0-V_1) - d \geq |V_1| + \frac{|H-V_0-V_1|}{2} - d \label{align::bound_H}
    \end{align}
    
    Again, using the equation $\alpha(H)=\alpha(H-Y)+1$ together with the upper bound for $\alpha(H-Y)$ and the lower bound for $\alpha(H)$ leads to the requested bound for the size of $Y$:
    \begin{align*}
        |V_1| + \frac{|H-V_0-V_1|}{2} - d \overset{(\ref{align::bound_H})}{\leq} \alpha(H) &= \alpha(H-Y) + 1 \overset{(\ref{align::IS_H-Y})}{\leq} |V_1| + \frac{|H-V_0-V_1|}{2} - \frac{|Y|}{2} + 1 \\
        \implies |Y| &\leq 2d+2 \text{.}\qedhere
    \end{align*}
\end{proof}

\begin{theorem}
    In a $d$-quasi-integral graph the size of a minimal blocking set has a tight upper bound of $2d+2$.
\end{theorem}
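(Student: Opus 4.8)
The plan is to combine the upper bound from Lemma~\ref{lemma::conflict_LP} with a matching lower bound construction, exactly paralleling the structure of the proof of Theorem~\ref{theorem::lower_bound_FVS}. Lemma~\ref{lemma::conflict_LP} already establishes that every minimal blocking set in a $d$-quasi-integral graph has size at most $2d+2$, so the only thing left is tightness: exhibit a $d$-quasi-integral graph whose unique (hence minimal) blocking set has size exactly $2d+2$.

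The natural candidate, as hinted in the introduction, is the clique $H = K_{2d+2}$. First I would verify that $K_{2d+2}$ is $d$-quasi-integral: the fractional vertex cover optimum is $\VCLP(K_{2d+2}) = (2d+2)/2 = d+1$, attained by assigning $\tfrac12$ to every vertex, while the integral vertex cover number is $\mathrm{vc}(K_{2d+2}) = 2d+1$. Hence $\mathrm{vc}(H) - \VCLP(H) = (2d+1) - (d+1) = d$, so $H$ is indeed $d$-quasi-integral (with equality, which is what we want). Since $H$ is connected, it is a legitimate single-component witness. Then, as in the proof of Theorem~\ref{theorem::lower_bound_FVS}, I would observe that $\alpha(K_{2d+2}) = 1$ and $\alpha(K_{2d+2} - Y') = 1$ for every proper subset $Y' \subsetneq V(H)$ (any nonempty subgraph of a clique still has independence number $1$), whereas $\alpha(K_{2d+2} - V(H)) = \alpha(\emptyset) = 0$. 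Therefore $Y = V(H)$ is the only blocking set, it is trivially minimal, and it has size $2d+2$, matching the upper bound.

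I do not anticipate a real obstacle here; the work is essentially the same two-line argument as in the tightness proof of Theorem~\ref{theorem::lower_bound_FVS}, the only genuinely new ingredient being the arithmetic check that $K_{2d+2}$ sits exactly on the boundary of the $d$-quasi-integral class (i.e.\ that the gap between $\mathrm{vc}$ and $\VCLP$ is precisely $d$, neither more nor less). The one point to state carefully is why no proper subset is blocking: deleting any vertices from a clique leaves either the empty graph or a smaller clique, and in the latter case the independence number stays at $1$, so the independence number only drops when we delete all of $V(H)$.

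\begin{proof}
    The upper bound of $2d+2$ was shown in Lemma~\ref{lemma::conflict_LP}; it remains to show that it is tight. Consider the connected graph $H = K_{2d+2}$. Assigning the value $\tfrac12$ to every vertex is a feasible (and optimal) solution to $\VCLP(H)$, so $\VCLP(H) = d+1$, while $\mathrm{vc}(H) = 2d+1$. Hence $\mathrm{vc}(H) = \VCLP(H) + d$, so $H$ is a $d$-quasi-integral graph. Since the independence number of a nonempty clique is $1$, we have $\alpha(H - Y') = 1 = \alpha(H)$ for every proper subset $Y' \subsetneq V(H)$, whereas $\alpha(H - V(H)) = 0$. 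Therefore $Y = V(H)$ is the only, and hence a minimal, blocking set in $H$, and $|Y| = 2d+2$.
\end{proof}
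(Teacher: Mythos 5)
Your proof is correct and follows essentially the same route as the paper: cite Lemma~\ref{lemma::conflict_LP} for the upper bound and use $H=K_{2d+2}$, with $\mathrm{vc}(H)=2d+1$ and $\VCLP(H)=d+1$, as the tightness witness whose only blocking set is $V(H)$. Your version is, if anything, slightly more explicit about why no proper subset of $V(H)$ is blocking.
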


\begin{proof}
The upper bound of $2d+2$ follows from Lemma~\ref{lemma::conflict_LP}. For tightness consider the connected graph $H=K_{2d+2}$. It holds that $\mathrm{vc}(H) \leq \VCLP(H) + d$, because $\mathrm{vc}(H) = 2d+1$ and $\VCLP(H) = \frac{1}{2} |H|=d+1$. Furthermore, the size of a maximum independent set in a clique is 1, thus $\alpha(H-Y')=1$ for all subsets $Y' \subsetneq V(H)$. Therefore, $Y=V(H)$ is the only, and hence a minimal, blocking set in $H$.
\end{proof}

\begin{theorem}
    \VC parameterized by a modulator to a $d$-quasi-integral graph admits a randomized polynomial kernel.
\end{theorem}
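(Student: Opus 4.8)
The plan is to imitate the kernelization for \VC parameterized by a modulator to a $d$-quasi-bipartite graph, changing only the blocking-set bound and the last step. As there, I would first pass from the given instance $(G,X,k)$ of \VC to the equivalent instance $(G,X,|V(G)|-k)$ of \IS, since Reduction Rule~\ref{rule::delete_cc} and the conflict machinery are formulated for \IS. This requires a maximum independent set of the $d$-quasi-integral graph $G-X$ to be computable in polynomial time, which holds: each connected component $H$ of $G-X$ satisfies $\mathrm{vc}(H)\leq\VCLP(H)+d$, so running a fixed-parameter algorithm for \VC above $\VCLP$ on $H$ with the (constant) parameter $d$ finds $\mathrm{vc}(H)$, hence a maximum independent set of $H$, in polynomial time; the union over the components is a maximum independent set of $G-X$.

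Next I would use the variant of Reduction Rule~\ref{rule::delete_cc} in which the threshold $d+2$ on $|X_I|$ is replaced by $2d+2$. By Lemma~\ref{lemma::conflict_LP} every minimal blocking set in a $d$-quasi-integral graph has size at most $2d+2$, and inserting this bound into the argument of Lemma~\ref{lemma::conflict_size} shows that whenever $\Conf{H}{X_I}>0$ there is an $X'\subseteq X_I$ with $|X'|\leq 2d+2$ and $\Conf{H}{X'}>0$. With this, the proofs of Lemma~\ref{lemma::bound_cc} and Lemma~\ref{lemma::rule1} carry over verbatim (they use only such a bound on minimal blocking sets together with generic facts about independent sets), so the modified rule is safe and, once it is no longer applicable, $G-X$ has at most $\binom{|X|}{\leq 2d+2}\cdot|X|\leq|X|^{2d+3}$ connected components. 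Exhaustive application removes one component each time, so it runs in polynomial time and produces an equivalent instance $(\widetilde{G},X,\widetilde{k})$ of \IS in which $\widetilde{G}-X$ has at most $|X|^{2d+3}$ connected components, exactly as in Theorem~\ref{theorem::quasi-forest}.

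The genuinely new step is that, instead of enlarging the modulator, I would bound the above-LP gap $\mathrm{vc}(\widetilde{G})-\VCLP(\widetilde{G})$ directly. An optimal vertex cover of $\widetilde{G}-X$ together with all of $X$ covers $\widetilde{G}$, so $\mathrm{vc}(\widetilde{G})\leq\mathrm{vc}(\widetilde{G}-X)+|X|$; and restricting an optimal fractional vertex cover of $\widetilde{G}$ to $V(\widetilde{G})\setminus X$ is feasible for $\widetilde{G}-X$, so $\VCLP(\widetilde{G}-X)\leq\VCLP(\widetilde{G})$. Since both $\mathrm{vc}$ and $\VCLP$ are additive over connected components and each component $H$ of $\widetilde{G}-X$ satisfies $\mathrm{vc}(H)-\VCLP(H)\leq d$, summing over the at most $|X|^{2d+3}$ components yields $\mathrm{vc}(\widetilde{G}-X)-\VCLP(\widetilde{G}-X)\leq d|X|^{2d+3}$, and therefore $\mathrm{vc}(\widetilde{G})-\VCLP(\widetilde{G})\leq d|X|^{2d+3}+|X|$. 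Translating back to \VC, the instance $(\widetilde{G},X,\widetilde{k})$ of \IS is equivalent to $(\widetilde{G},k'')$ of \VC with $k''=|V(\widetilde{G})|-\widetilde{k}$, and $B:=\VCLP(\widetilde{G})+d|X|^{2d+3}+|X|$ is a polynomially computable upper bound on $\mathrm{vc}(\widetilde{G})$. If $k''\geq B$ I output a trivial yes-instance; if $k''<\VCLP(\widetilde{G})$ a trivial no-instance; otherwise $\VCLP(\widetilde{G})\leq k''<B$, so the above-LP parameter $\ell:=k''-\VCLP(\widetilde{G})$ satisfies $0\leq\ell<d|X|^{2d+3}+|X|$, which is polynomial in $|X|$. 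Running the randomized polynomial kernelization of Kratsch and Wahlstr\"om~\cite{KratschW12} for \VC above $\VCLP$ on $(\widetilde{G},k'')$ then yields a \VC instance of size polynomial in $\ell$, hence polynomial in $|X|$, as required.

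I expect the effort to be bookkeeping rather than conceptual: verifying that every \IS-side lemma of Section~\ref{section::d_forest} is indeed insensitive to replacing $d+2$ by $2d+2$ (already asserted by the authors in the $d$-quasi-bipartite case), and handling the half-integrality of $\VCLP$ and the two trivial cases so that the parameter handed to~\cite{KratschW12} is correctly bounded and non-negative. The one substantive point is the gap bound $\mathrm{vc}(\widetilde{G})-\VCLP(\widetilde{G})\leq d|X|^{2d+3}+|X|$: this is where the bound on the number of surviving components becomes a bound on the above-LP parameter, and it relies on $\mathrm{vc}$ and $\VCLP$ being additive over connected components together with $\VCLP$ being non-increasing under vertex deletion.
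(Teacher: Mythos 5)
Your proposal is correct and follows essentially the same route as the paper's own proof: apply Reduction Rule~\ref{rule::delete_cc} with the blocking-set bound $2d+2$ from Lemma~\ref{lemma::conflict_LP} to cut the number of components of $G-X$ to $|X|^{2d+3}$, then bound the above-LP parameter by $d|X|^{2d+3}+|X|$ via additivity of $\mathrm{vc}$ and $\VCLP$ over components and the $d$-quasi-integral property, and finally invoke the randomized kernelization for \VC above $\VCLP$ of Kratsch and Wahlstr\"om. Your handling of the trivial yes/no cases and the explicit justification that a maximum independent set of $G-X$ is polynomial-time computable are slightly more careful than the paper's, but conceptually identical.
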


\begin{proof}
    Let $(G,X,k)$ be an instance of \VC parameterized by a modulator to a $d$-quasi-integral graph. First, we transform it into an equivalent instance $(G,X,|V(G)|-k)$ of \IS parameterized by a modulator to a $d$-quasi-integral graph. 
    Just as for the kernelization for \IS parameterized by a modulator to a $d$-quasi-forest, we can obtain in polynomial time an equivalent instance $(\widetilde{G},X,|\widetilde{G}|-\widetilde{k})$ of \IS parameterized by a modulator to a $d$-quasi-integral graph by applying Reduction Rule \ref{rule::delete_cc} exhaustively. Now, $\widetilde{G}-X$ has at most $|X|^{2d+3}$ connected components. 
    
    The proof works analogously to Theorem \ref{theorem::quasi-forest}, because we can compute a maximum independent set in a $d$-quasi-integral graph in polynomial time (because $d$ is a constant) and because Lemma \ref{lemma::bound_cc} and Lemma \ref{lemma::rule1} still hold. The proofs of these Lemmas use only the fact that there exists a minimal blocking set of size at most $d+2$ and general facts about maximum independent sets. (Since we have a minimal blocking set of size $2d+2$ we have to replace every $d$ by $2d$).
    Now, we transform the instance $(\widetilde{G},X,|V(\widetilde{G})|-\widetilde{k})$ of \IS parameterized by a modulator to a $d$-quasi-integral graph into the equivalent instance $(\widetilde{G},X,\widetilde{k})$ of \VC parameterized by a modulator to a $d$-quasi-integral graph.
    
    Note that a vertex cover in $\widetilde{G} - X$ together with the set $X$ is a vertex cover of $\widetilde{G}$; the size of this vertex cover is $\mathrm{vc}(\widetilde{G}-X) + |X|$. We can assume that $\widetilde{k}$ is strictly smaller than the size of this vertex cover; otherwise the set $X \cup X_{\widetilde{G}-X}$ is a vertex cover of $G$ of size at most $\widetilde{k}$ that we can compute in polynomial time, where $X_{\widetilde{G}-X}$ is a minimum vertex cover in $\widetilde{G}-X$. Thus, in the following we assume that $\mathrm{vc}(\widetilde{G}-X) + |X| > \widetilde{k}$.
    
    Finally, we apply the kernelization algorithm for \VC above $\VCLP$ to the instance $(\widetilde{G},\widetilde{k})$ and obtain an instance $(G',k')$ in polynomial time. Note that we can bound the parameter $\widetilde{k} - \VCLP(\widetilde{G})$ by a polynomial in the size of $X$ as follows:
    \begin{align*}
        \widetilde{k} - \VCLP(\widetilde{G}) 
        &\leq \widetilde{k} - \VCLP(\widetilde{G}-X) \\
        &= \widetilde{k} - \sum_{H \text{ c.c of } \widetilde{G}-X}  \VCLP(H) \\
        &\leq \widetilde{k} - \sum_{H \text{ c.c of } \widetilde{G}-X} (\mathrm{vc}(H) - d) \\
        &= \widetilde{k} + |X|^{2d+3} d -  \mathrm{vc}(\widetilde{G}-X) \\
        &\leq |X| + |X|^{2d+3} d
    \end{align*}
    
    Since $(G',k')$ is polynomially bounded in the size of $\widetilde{k}-\VCLP(\widetilde{G})$, which is bounded by a polynomial in the size of $|X|$, we know that the instance $(G',X'=V(G'),k')$ is an equivalent instance of \VC parameterized by a modulator to a $d$-quasi-integral graph.
\end{proof}

\section{Conclusion}

Starting from the work of Fomin and Str{\o}mme~\cite{FominS16} we have presented new results for polynomial kernels for \VC subject to structural parameters. Our results for modulators to $d$-quasi-forests show that bounds on the feedback vertex set size are more meaningful for kernelization than the treewidth of $G-X$ (recalling that there is a lower bound for treewidth of $G-X$ being at most two). By extending our kernelization to work for modulators to ($d$-quasi-bipartite and) $d$-quasi-integral graphs, we have encompassed existing kernelizations for parameterization by distance to forests~\cite{JansenB13}, distance to max degree two~\cite{MajumdarRS15} (both previously subsumed by), distance to pseudoforests~\cite{FominS16}, and parameterization above fractional optimum~\cite{KratschW12}. It would be interesting whether there is a single positive result that encompasses all parameterizations with polynomial kernels.

To obtain our results we have established tight bounds for the size of minimal blocking sets in $d$-quasi-forests, $d$-quasi-bipartite graphs, and $d$-quasi-integral graphs. Tightness comes from the fact that cliques of size $d+2$ respectively $2d+2$ are contained in these classes. The presence of these cliques also implies lower bounds ruling out kernels of size $\Oh(|X|^{r-\epsilon})$, assuming \ncontainment, when $r=r(d)$ is the maximum size of minimal blocking sets as a consequence of a lower bound by Majumdar et al.~\cite{MajumdarRS15}. It would be interesting whether there are matching upper bounds for kernelization, e.g., whether the kernelization of Jansen and Bodlaender~\cite{JansenB13} for modulators to forests can be improved to size $\Oh(|X|^2)$.

\bibliography{lit}

\end{document}